\documentclass[12pt]{article}

\usepackage{amsmath}
\usepackage{amssymb}
\usepackage{amsthm}
\usepackage{bm}
\usepackage{color}
\usepackage{fullpage}
\usepackage{graphicx}
\usepackage{natbib}
\usepackage{stmaryrd} % for \mapsfrom
\usepackage{subcaption}
\usepackage{thmtools}
\usepackage{tikz-cd} % for commutative diagrams

\newtheorem{definition}{Definition}

\newtheorem*{principle*}{Principle}
\newtheorem{theorem}{Theorem}
\newtheorem{corollary}{Corollary}

\newtheorem{example}{Example}

\newcommand{\R}{\mathbb{R}}

\usetikzlibrary{arrows}
\usetikzlibrary{calc}
\usetikzlibrary{shapes}
\usetikzlibrary{decorations.pathreplacing}

\begin{document}

\title{On the Bayesian Solution of Differential Equations}
\author{Junyang Wang $^{1}$, Jon Cockayne $^{2}$ and Chris. J. Oates $^{1,3}$ \\
$^{1}$School of Mathematics, Statistics and Physics, Newcastle University, UK \\
$^{2}$Department of Statistics, University of Warwick, UK \\
$^{3}$The Alan Turing Institute, UK}
\maketitle

\abstract{The interpretation of numerical methods, such as finite difference methods for differential equations, as point estimators allows for formal statistical quantification of the error due to discretisation in the numerical context. 
Competing statistical paradigms can be considered and Bayesian probabilistic numerical methods (PNMs) are obtained when Bayesian statistical principles are deployed.
Bayesian PNM are closed under composition, such that uncertainty due to different sources of discretisation can be jointly modelled and rigorously propagated.
However, we argue that no strictly Bayesian PNM for the numerical solution of ordinary differential equations (ODEs) have yet been developed.
To address this gap, we work at a foundational level, where a novel Bayesian PNM is proposed as a proof-of-concept.
Our proposal is a synthesis of classical Lie group methods, to exploit the underlying structure of the gradient field, and non-parametric regression in a transformed solution space for the ODE.
The procedure is presented in detail for first order ODEs and relies on a certain technical condition -- existence of a solvable Lie algebra -- being satisfied.
Numerical illustrations are provided.}

\section{Introduction}

Numerical methods underpin almost all of scientific, engineering and industrial output.
In the abstract, a numerical task can be formulated as the approximation of a quantity of interest 
\begin{eqnarray*}
Q : \mathcal{Y} & \rightarrow & \mathcal{Q}, 
\end{eqnarray*}
subject to a finite computational budget. 
The true underlying state $\mathrm{y}^\dagger \in \mathcal{Y}$ is typically high- or infinite-dimensional, so that only limited information 
\begin{eqnarray*}
A : \mathcal{Y} & \rightarrow & \mathcal{A}
\end{eqnarray*}
on the true state can be exploited for computation and exact computation of $Q(\mathrm{y}^\dagger)$ is prohibited.
For example, $Q(\mathrm{y})$ might be an integral $\int_0^1 \mathrm{y}(x) \mathrm{d}x$ and $A(\mathrm{y})$ might consist of a finite number of evaluations $[\mathrm{y}(x_1) , \dots , \mathrm{y}(x_n)]$ of the integrand.
In this abstract framework a numerical method (e.g. the trapezoidal rule) corresponds to a map $b : \mathcal{A} \rightarrow \mathcal{Q}$, as depicted in Figure \ref{fig: diag1}, where $b(a)$ represents an approximation to the quantity of interest based on the information $a \in \mathcal{A}$.

The increasing ambition and complexity of contemporary applications is such that the computational budget can be \emph{extremely} small compared to the precision that is required at the level of the quantity of interest.
As such, in many important problems it is not possible to reduce the numerical error to a negligible level.
Fields acutely exposed to this challenge include climate forecasting \citep{Wedi2014}, computational cardiology \citep{Chabiniok2016} and molecular dynamics \citep{Perilla2015}.
In the presence of non-negligible numerical error, it is unclear how scientific interpretation of the output of computation can proceed.

\subsection{Probabilistic Numerical Methods} \label{subsec: PNM}

The field of \emph{probabilistic numerics} dates back to \cite{Larkin1972} and aims to provide formal statistical foundations for numerical methods.
Indeed, under the abstract framework just described, numerical methods can be interpreted as point estimators in a statistical context, where a state $\mathrm{y} \in \mathcal{Y}$ can be thought of as a latent variable in a statistical model, and the `data' consist of information $A(\mathrm{y})$ that does not fully determine the quantity of interest $Q(\mathrm{y})$ but is indirectly related to it. 
\cite{Hennig2015} provide an accessible introduction and survey of the field. 

Let the notation $\Sigma_{\mathcal{Y}}$ denote a $\sigma$-algebra on the space $\mathcal{Y}$, assumed fixed, and let $\mathcal{P}_{\mathcal{Y}}$ denote the set of probability measures on $(\mathcal{Y},\Sigma_{\mathcal{Y}})$.
A probabilistic numerical method (PNM) is a procedure which takes as input a `belief' distribution $\mu \in \mathcal{P}_{\mathcal{Y}}$, representing epistemic uncertainty with respect to the true (but unknown) value $\mathrm{y}^\dagger$, along with a finite amount of information, $A(\mathrm{y}^\dagger) \in \mathcal{A}$.
The output is a distribution $B(\mu,A(\mathrm{y}^\dagger)) \in \mathcal{P}_{\mathcal{Q}}$ on $(\mathcal{Q},\Sigma_{\mathcal{Q}})$, representing epistemic uncertainty with respect to the quantity of interest $Q(\mathrm{y}^\dagger)$ after the information $A(\mathrm{y}^\dagger)$ has been processed. 
For example, a PNM for an ordinary differential equation (ODE) takes an initial belief distribution defined on the solution space of the differential equation, together with information arising from a finite number of evaluations of the gradient field, plus the initial condition of the ODE, to produce a distribution over either the solution space of the ODE, or perhaps some derived quantity of interest. 
In this paper, the measurability of $A$ and $Q$ will be assumed.

Despite computational advances in this emergent field, until recently there had not been an attempt to establish rigorous statistical foundations for PNM. 
In \cite{Cockayne2017} the authors argued that Bayesian principles can be adopted.
In brief, this framework requires that the output of a PNM agrees with a rational Bayesian agent:
Let $Q_{\#} : \mathcal{P}_{\mathcal{Y}} \rightarrow \mathcal{P}_{\mathcal{Q}}$ denote the push-forward map associated to $Q$.
i.e. $Q_{\#}(\mu)(S) = \mu(Q^{-1}(S))$ for all $S \in \Sigma_{\mathcal{Q}}$.
Let $\{\mu^a\}_{a \in \mathcal{A}} \subset \mathcal{P}_{\mathcal{Y}}$ denote the disintegration, assumed to exist\footnote{The reader unfamiliar with the concept of a disintegration can safely interpret $\mu^a$ as a technical notion of the `conditional distribution of $\mathrm{y}$ given $A(\mathrm{y}) = a$' in the context of this work. The \emph{disintegration theorem}, Thm. 1 of \cite{Chang1997}, guarantees existence and uniqueness of the disintegration up to a $A_{\#}\mu$-null set under the weak requirements that $\mathcal{Y}$ is a metric space, $\Sigma_{\mathcal{Y}}$ is the Borel $\sigma$-algebra, $\mu$ is Radon, $\Sigma_{\mathcal{A}}$ is countable generated and $\Sigma_{\mathcal{A}}$ contains all singletons $\{a\}$ for $a \in \mathcal{A}$. }, of $\mu \in \mathcal{P}_{\mathcal{Y}}$ along the map $A$.

\begin{figure}[t!]
\centering
\begin{subfigure}{0.3\textwidth}
\centering
\begin{tikzcd}
\mathcal{Y} \arrow[d,"Q",swap] \arrow[rr,"A"] & & \mathcal{A} \arrow[lld,"b"] \\
\mathcal{Q}
\end{tikzcd}
\caption{}
\label{fig: diag1}
\end{subfigure}
\begin{subfigure}{0.3\textwidth}
\centering
\begin{tikzcd}
\mathcal{P}_{\mathcal{Y}} \arrow[d,"Q_{\#}",swap] & & \mathcal{A} \arrow[ll,"\mu^a \mapsfrom a",swap] \arrow[lld,"{B(\mu,a) \mapsfrom a}"] \\
\mathcal{P}_{\mathcal{Q}}
\end{tikzcd}
\caption{}
\label{fig: diag2}
\end{subfigure}
\caption{Diagrams for a numerical method.
(a) The traditional viewpoint of a numerical method is equivalent to a map $b$ from a finite-dimensional information space $\mathcal{A}$ to the space of the quantity of interest $\mathcal{Q}$.
(b) The probabilistic viewpoint treats approximation of $Q(\mathrm{y}^\dagger)$ in a statistical context, described by a map $B(\mu,\cdot)$ from $\mathcal{A}$ to the space of probability distributions on $\mathcal{Q}$.
The probabilistic numerical method $(A,B)$ is Bayesian if and only if (b) is a commutative diagram.}
\end{figure}
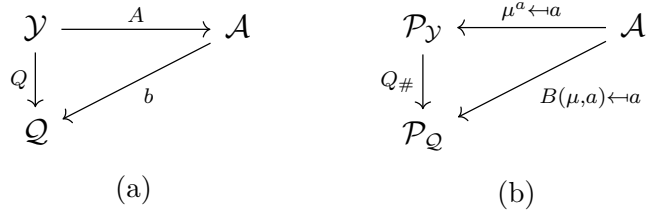

\begin{definition} \label{def: Bayesian}
A probabilistic numerical method $(A,B)$ with $A : \mathcal{Y} \rightarrow \mathcal{A}$ and $B : \mathcal{P}_{\mathcal{Y}} \times \mathcal{A} \rightarrow \mathcal{P}_{\mathcal{Q}}$ for a quantity of interest $Q : \mathcal{Y} \rightarrow \mathcal{Q}$ is \emph{Bayesian} if and only if $B(\mu,a) = Q_{\#}(\mu^a)$ for all $\mu \in \mathcal{P}_{\mathcal{Y}}$ and all $a \in \mathcal{A}$.
\end{definition}

This definition is intuitive; the output of the PNM should coincide with the marginal distribution for $Q(\mathrm{y}^\dagger)$ according to the disintegration element $\mu^a \in \mathcal{P}_{\mathcal{Y}}$, based on the information $a \in \mathcal{A}$ that was provided. 
The definition is equivalent to the statement that Figure \ref{fig: diag2} is a commutative diagram.
In \cite{Cockayne2017} the map $A$ was termed an \emph{information operator} and the map $B$ was termed a \emph{belief update operator}; we adhere to these definitions in our work.

The Bayesian approach to PNM confers several important benefits:
\begin{itemize}
\item The input $\mu$ and output $B(\mu,a)$ belief distributions can be interpreted, respectively, as a \emph{prior} and (marginal) \emph{posterior}. 
As such, they automatically inherit the stronger formal semantics and philosophical foundations that underpin the Bayesian framework and, in this sense, are well-understood \citep[see e.g.][]{Gelman2013}.
\item The definition of Bayesian PNM is operational.
Thus, if we are presented with a prior $\mu$ and information $a$ then there is a unique Bayesian PNM and it is constructively defined.
\item The class of Bayesian PNM is closed under composition, such that uncertainty due to different sources of discretisation can be jointly modelled and rigorously propagated.
This point will not be discussed further in this work, but we refer the interested reader to Section 5 of \cite{Cockayne2017}.
\end{itemize}

Nevertheless, the strict definition of Bayesian PNM limits scope to design convenient computational algorithms and indeed several proposed PNM are not Bayesian \citep[see Table 1 in][]{Cockayne2017}.
The challenge is two-fold; for a Bayesian PNM, the elicitation of an appropriate prior distribution $\mu$ and the exact computation of its disintegration $\{\mu^a\}_{a \in \mathcal{A}}$ must both be addressed. 
In the next section we argue that -- perhaps as a consequence of these constraints -- a strictly Bayesian PNM for the numerical solution of an ODE does not yet exist.

\subsection{Early Work on PNM for ODEs} \label{subsec: existing work}

Consider a generic univariate first-order initial value problem
\begin{eqnarray}
\frac{\mathrm{d}\mathrm{y}}{\mathrm{d}x} & = & f(x,\mathrm{y}(x)), \qquad x \in [x_0,x_T], \qquad \mathrm{y}(x_0) = y_0 \label{eq: ODE1} 
\end{eqnarray}
The notational convention used in this paper is that $\mathrm{y}$ denotes a generic function, whereas the italicised $y$ denotes a generic value taken by that function.
Throughout this paper all ODEs that we consider will be assumed to be well-defined and admit a unique solution $\mathrm{y}^\dagger \in \mathcal{Y}$ where $\mathcal{Y}$ is some pre-specified set.
In this paper the quantity of interest $Q(\mathrm{y}^\dagger)$ will be the solution curve $\mathrm{y}^\dagger$ itself.

The first probabilistic perspective on ODEs, of which we are aware, was \cite{Skilling1992}.
Originally described as `Bayesian' by the author, we will argue that, at least in the strict sense of Definition \ref{def: Bayesian}, it is not a Bayesian PNM.
In general, a PNM attempts to reason about $\mathrm{y}^\dagger$ based on evaluations $f(x_i,y_i)$ of the gradient field for certain input pairs $\{(x_i,y_i)\}_{i=1}^n$.
Let $a_i = f(x_i,y_i)$ and $a^i = [a_0,\dots,a_i]$.
The selection of the input pairs $(x_i,y_i)$ on which $f$ is evaluated is not constrained and several possibilities, of increasing complexity, were discussed in \cite{Skilling1992}.
To fix ideas, the simplest such approach is to proceed iteratively as follows:
\begin{enumerate}
\item[(0.1)] The first pair $(x_0,y_0)$ is fully determined by the initial condition of the ODE.
\item[(0.2)] This provides one piece of information, $a_0 = f(x_0,y_0)$.
\item[(0.3)] The belief $\mu$ is updated according to $a_0$, leading to a new belief $\mu_0$ which is just the disintegration element $\mu^{a^0}$.
\item[(1)] A discrete time step $x_1 = x_0 + h$, where $h = \frac{x_T - x_0}{n} > 0$, is performed and a particular point estimate $y_1 = \int \mathrm{y}(x_1) \mathrm{d} \mu_0(\mathrm{y})$ for the unknown true value $\mathrm{y}^\dagger(x_1)$ is obtained.
This specifies the second pair $(x_1,y_1)$.
(Here $y_1$ is the predictive mean for $\mathrm{y}(x_1)$ based on the current belief $\mu_0$.)
\end{enumerate}
The process continues similarly, such that at time step $i-1$ we have a belief distribution $\mu_{i-1} = B(\mu,a^{i-1}) \in \mathcal{P}_{\mathcal{Y}}$, where the general belief update operator $B$ is yet to be defined, and the following step is performed:
\begin{enumerate}
\item[($i$)] Let $x_i = x_{i-1} + h$ and set $y_i = \int \mathrm{y}(x_i) \mathrm{d}\mu_{i-1}(\mathrm{y})$ .
\end{enumerate}
The final output is a probability distribution $\mu_n = B(\mu,a^n) \in \mathcal{P}_{\mathcal{Y}}$.

The method just described is \emph{not} actually a PNM in the concrete sense that we have defined.
Indeed, the final output $\mu_n$ is a deterministic function of the values $a^n$ of the gradient field that were obtained.
However, the values of the gradient field $f(x,y)$ outside any open neighbourhood of the true solution curve $\mathcal{C} = \{(x,y) : y = \mathrm{y}^\dagger(x), \; x \in [x_0,x_T]\}$ do not determine the solution of the ODE and, conversely, the solution of the ODE provides no information about the values of the gradient field outside any open neighbourhood of the true solution curve $\mathcal{C}$.
Thus it is not possible, in general, to write down an information operator $A : \mathcal{Y} \rightarrow \mathcal{A}$ that reproduces the information $a^n$ when applied to the solution curve $\mathrm{y}^\dagger(\cdot)$ of the ODE.

The approach taken in \cite{Skilling1992}, cast in our abstract framework, was therefore to posit an \emph{approximate} information operator 
$$\hat{A}(\mathrm{y}) = \left[\frac{\mathrm{d}\mathrm{y}}{\mathrm{d}x}(x_0), \\ \dots \\ ,\frac{\mathrm{d}\mathrm{y}}{\mathrm{d}x}(x_n)\right]$$
Of course, $\hat{A}(\mathrm{y}^\dagger) \neq a^n$ in general.
To acknowledge the approximation error, \cite{Skilling1992} proposed a particular belief update operator $B$ that attempts to model the information with a Gaussian potential:
\begin{eqnarray}
\frac{\mathrm{d}\mu_n}{\mathrm{d}\mu_0}(\mathrm{y}) = \prod_{i=1}^n\frac{\mathrm{d}\mu_i}{\mathrm{d}\mu_{i-1}}(\mathrm{y}), \qquad \frac{\mathrm{d}\mu_i}{\mathrm{d}\mu_{i-1}}(\mathrm{y}) \propto \exp\left( -\frac{1}{2 \sigma^2} \left( \frac{\mathrm{d}\mathrm{y}}{\mathrm{d}x}(x_i) - f(x_i,y_i) \right)^2 \right)  \label{eq: Skilling Lhood}
\end{eqnarray}
This Gaussian potential was referred to in \cite{Skilling1992} as a `likelihood' and, together with $\mu_0 = \mu^{a^0}$, the output $\mu_n$ is completely specified.
Here $\sigma$ is a fixed positive constant, however in principle a non-diagonal covariance matrix can also be considered.

The negative consequences of basing inferences on an approximate information operator $\hat{A}$ are potentially twofold.
First, in the special case where the gradient field $f$ does not depend the second argument, the quantities $\frac{\mathrm{d}\mathrm{y}}{\mathrm{d}x}(x_i)$ and $f(x_i,y_i)$ are identical.
From this perspective, $\mu_n$ represents inference under a mis-specified likelihood, since information is treated as erroneous when it is in fact exact.
Second, recall that values of the gradient field that are not contained on the true solution curve $\mathcal{C}$ of the ODE are \emph{ancillary} with respect to the quantity of interest $\mathrm{y}^\dagger$.
It follows that dependence of the output $\mu_n$ on these ancillary values violates the \emph{conditionality principle} (CP) \citep[][Sec. 2.2]{Cox1974}.
Since the CP is an implication of the \emph{likelihood principle}, violation of the CP implies violation of the Bayesian framework.
This confirms, through a different argument, that the approach of \cite{Skilling1992} cannot be Bayesian in the sense of Definition \ref{def: Bayesian}.

\subsection{Recent Work on PNM for ODEs} 

Inspired by \cite{Skilling1992}, several variations on the method have been proposed:
\begin{itemize}
\item The approach of \cite{Schober2014} considered Eq.~\eqref{eq: Skilling Lhood} in the $\sigma \downarrow 0$ limit.
The authors proved that if the input pair $(x_1,y_1)$ is  taken as $y_1 = \int \mathrm{y}(x_1) \mathrm{d}\mu_0(\mathrm{y})$, and a particular belief $\mu$ was used, then the smoothing estimate $\hat{y}_1 = \int \mathrm{y}(x_1) \mathrm{d} \mu_1(\mathrm{y})$, i.e. the posterior mean for $\mathrm{y}(x_1)$ based on information $a^1$, coincides with the deterministic approximation to $\mathrm{y}^\dagger(x_1)$ that would be provided by a $k$-th order Runge-Kutta method.
Similar connections to multistep methods of Nordsieck and Adams form were identified, respectively, in \cite{Schober2016} and \cite{Teymur2016,Teymur2018}.
\item The work of \cite{Kersting2016} proposed an adaptive choice of covariance matrix for use in Eq.~\eqref{eq: Skilling Lhood}, in order to encourage uncertainty estimates to be better calibrated. 
\item The original work of \cite{Chkrebtii2013} is somewhat related to \cite{Kersting2016}, however instead of using the mean of the current posterior as input to the gradient field, the input pair $(x_i,y_i)$ was selected by sampling $y_i$ from the marginal distribution for $\mathrm{y}(x_i)$ implied by $\mu_{i-1}$. 
\item The approaches proposed in \cite{Conrad2015,Abdulle2018} are not motivated in the Bayesian framework, but instead seek to introduce a stochastic perturbation into a classical numerical method.
\end{itemize}
The above methods provide elegant and practical uncertainty quantification for ODEs, but do not fulfil the requirements of a Bayesian PNM.

 \subsection{Our Contribution}

This paper presents a proof-of-concept PNM for the numerical solution of ODEs that is strictly Bayesian in the sense of Definition \ref{def: Bayesian}.
For an $n$th order ODE, we require the ODE to admit an $n$-dimensional solvable Lie algebra.
Then, our proposal is to perform exact Bayesian inference on a particular Lie-transformed ODE, whose gradient field is a function of the independent state variable only, in effect reducing the ODE to an integral (see Fig. \ref{fig: schematic}).
Only the case of first order ODE is presented in this paper, due to page limitations, but the method itself is general\footnote{Full details for second- and higher-order ODEs will be presented in a subsequent extended manuscript}.
In addition to the benefits conferred in the Bayesian framework, detailed in Section \ref{subsec: PNM} and in \cite{Cockayne2017}, the method being proposed can be computationally realised, but at a substantially increased run-time compared to existing, non-Bayesian approaches.
As such, we consider this work to be a proof-of-concept rather than an applicable Bayesian PNM.

\begin{figure}[t!]
\centering
\resizebox{.7\textwidth}{!}{
\begin{tikzpicture}
\node[anchor=south west,inner sep=0] (image) at (0,0) {
\includegraphics[width = 0.6\textwidth,clip,trim = 4cm 2cm 4cm 2cm,page=1]{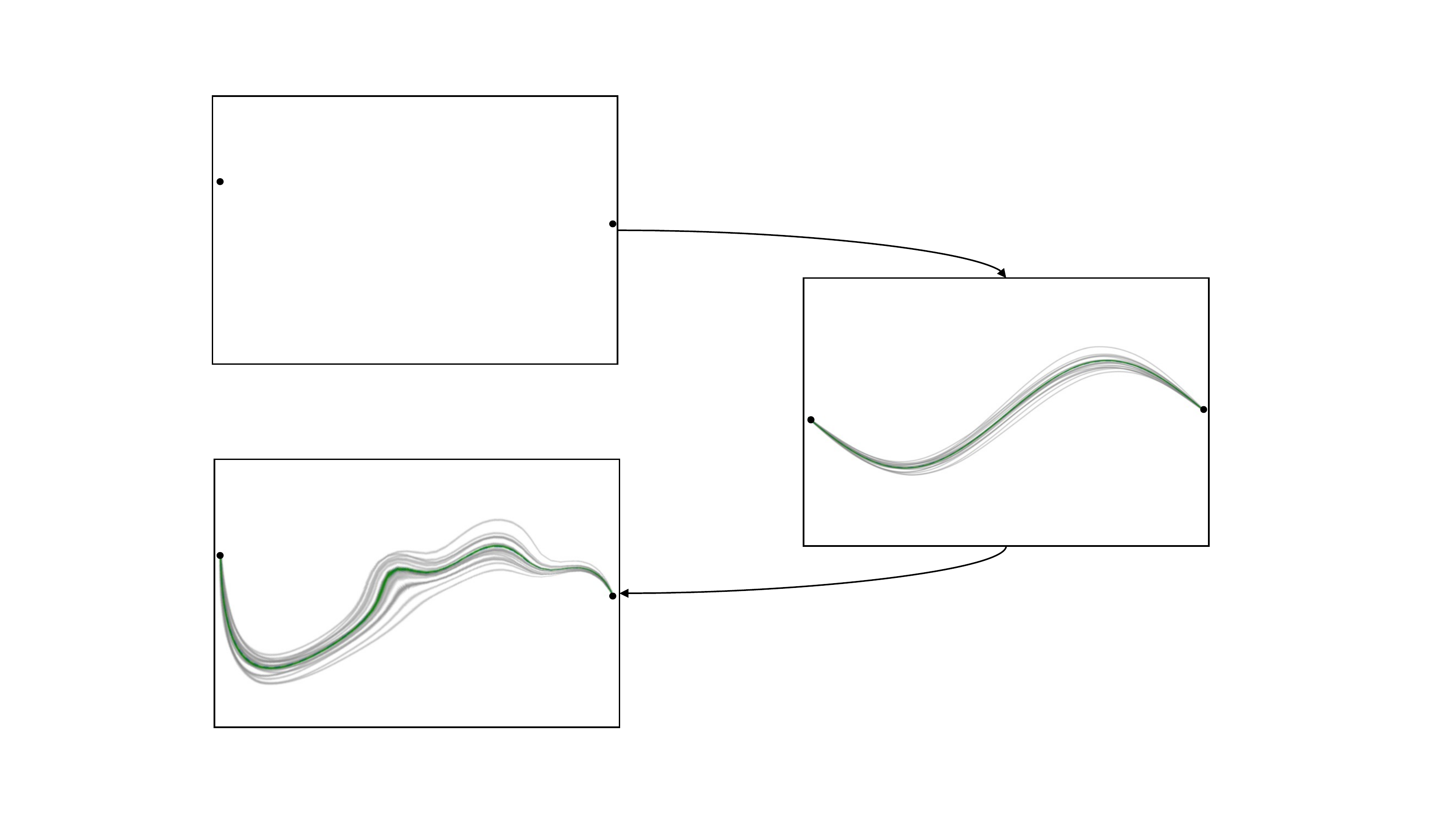}};
\node (A) at (-1.5,4.5) {$\cfrac{\mathrm{d}\mathrm{y}}{\mathrm{d}x} = f(x,\mathrm{y}(x))$};
\node (B) at (10.5,3) {$\cfrac{\mathrm{d}\mathrm{s}}{\mathrm{d}r} = G(r)$};
\node (C) at (8,5) {Lie transform; $(x,y) \mapsto (r,s)$};
\node (D) at (8.5,0.85) {(Lie transform)$^{-1}$; $(r,s) \mapsto (x,y)$};
\node (E) at (-1.3,1.7) {Exact};
\node (F) at (-1.3,1.3) {Bayesian};
\node (G) at (-1.3,0.9) {PNM};
\node (H) at (2.2,4.5) {?};
\end{tikzpicture} }
\caption{Schematic of our proposed approach.
An $n$th order ODE that admits a solvable Lie algebra can be transformed into $n$ integrals, to which exact Bayesian probabilistic numerical methods can be applied.
The posterior measure on the transformed space is then pushed back through the inverse transformation onto the original domain of interest.
}
\label{fig: schematic}
\end{figure}
 
%%%%%%%%%%%%%%%%%%%%%%%%%%%%%%%%%%%%%%%%%%
\section{Methods} \label{sec: methods}

In this section our novel Bayesian PNM is presented for the case of First Order ODEs. This allows some of the more technical details associated to the general case to be omitted, due to the fact that any one-dimensional Lie algebra is trivially solvable. 
Due to page restrictions, an overview of basic Lie group methods is reserved for the Appendix, and we also refer the reader to chapters 1-3 of \cite{Bluman2002} for complete detail.

The main result from Lie group methods that will allow us to construct an exact Bayesian PNM is as follows:

\begin{theorem}[Reduction of a First Order ODE to an Integral]\label{1stordertoquad}
If a first order ODE 
\begin{eqnarray}\label{1storder}
\frac{\mathrm{d}\mathrm{y}}{\mathrm{d}x} & = & f(x,\mathrm{y}(x)), \qquad \mathrm{y} \in \mathcal{Y}
\end{eqnarray}
admits a one parameter Lie group of transformations, then there exists canonical coordinates $r(x,y)$, $s(x,y)$ such that 
\begin{eqnarray}
\frac{\mathrm{d}\mathrm{s}}{\mathrm{d}r} & = & G(r), \qquad \mathrm{s} \in \mathcal{S} \label{eq: G integral}
\end{eqnarray}
for some explicit function $G(r)$.
\end{theorem}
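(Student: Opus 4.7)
The plan is to use the standard machinery of Lie group methods: passing from the one-parameter Lie group acting on $(x,y)$-space to its infinitesimal generator, and then rectifying this generator by introducing canonical coordinates in which the group acts by translation. Once the group acts by translation in the $s$-direction, invariance of the ODE under the group forces the transformed right-hand side to be independent of $s$, which is exactly the claim.

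More concretely, I would proceed in the following steps. First, I would let $X = \xi(x,y)\,\partial_x + \eta(x,y)\,\partial_y$ denote the infinitesimal generator of the admitted one-parameter Lie group, and record the invariance condition satisfied by $X$ with respect to the ODE (which, via the first prolongation $X^{(1)}$, is the symmetry condition $X^{(1)}(\mathrm{d}\mathrm{y}/\mathrm{d}x - f(x,\mathrm{y})) = 0$ on the solution set). Second, working in a neighbourhood where $(\xi,\eta) \neq (0,0)$, I would invoke the flow-box/straightening theorem for non-vanishing smooth vector fields to produce canonical coordinates $r(x,y)$, $s(x,y)$ characterised by
\begin{equation*}
X r = \xi\, r_x + \eta\, r_y = 0, \qquad X s = \xi\, s_x + \eta\, s_y = 1,
\end{equation*}
with the regularity condition $r_x s_y - r_y s_x \neq 0$ so that $(x,y) \mapsto (r,s)$ is a local diffeomorphism. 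Existence follows from the method of characteristics applied to these two first-order linear PDEs.

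Third, I would use the chain rule to rewrite the ODE in the new variables. Treating $\mathrm{s}$ as a function of $r$ along a solution curve,
\begin{equation*}
\frac{\mathrm{d}\mathrm{s}}{\mathrm{d}r} \;=\; \frac{s_x + s_y\, f(x,\mathrm{y})}{r_x + r_y\, f(x,\mathrm{y})} \;=:\; F(r,s),
\end{equation*}
so that the transformed ODE takes the general form $\mathrm{d}\mathrm{s}/\mathrm{d}r = F(r,s)$. Fourth, in the canonical coordinates the generator becomes $X = \partial_s$, and the admitted symmetry acts simply by translation $s \mapsto s + \epsilon$. Applying the invariance condition to $F$ — equivalently, applying $X^{(1)} = \partial_s$ to $\mathrm{d}\mathrm{s}/\mathrm{d}r - F(r,s)$ — yields $\partial_s F(r,s) = 0$, so that $F(r,s) = G(r)$ depends only on $r$. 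This gives \eqref{eq: G integral} with an explicit expression for $G$ in terms of $\xi$, $\eta$, $r$, $s$ and $f$.

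The main obstacle, and the step most worth being careful about, is the third and fourth: justifying that after the coordinate change the right-hand side is genuinely independent of $s$. The cleanest route is to argue coordinate-invariantly via $X = \partial_s$ and the symmetry condition, rather than trying to eliminate $s$ by hand from the messy chain-rule expression. A secondary technical point is the local nature of the argument — canonical coordinates exist on any neighbourhood where $X$ is non-vanishing, and one should assume (as is standard in Lie group methods for ODEs) that we are working on such a neighbourhood containing the solution curve of interest.
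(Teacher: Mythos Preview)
Your proposal is correct and follows essentially the same route as the paper: introduce canonical coordinates via $\mathrm{X}r=0$, $\mathrm{X}s=1$, rewrite the ODE by the chain rule as $\mathrm{d}\mathrm{s}/\mathrm{d}r = (s_x + s_y f)/(r_x + r_y f) =: G(r,s)$, and then use that the group acts as $s \mapsto s+\epsilon$ in these coordinates to conclude $G(r,s+\epsilon)=G(r,s)$ for all $\epsilon$, hence $G=G(r)$. The only cosmetic difference is that the paper argues directly from the translation $s\mapsto s+\epsilon$ rather than via $\mathrm{X}^{(1)}=\partial_s$, which you note are equivalent.
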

\begin{proof}
See Appendix.
\end{proof}

Note that the transformed ODE in Eq.~\eqref{eq: G integral} is nothing more than an integral, for which exact Bayesian PNM can be used \citep[e.g.][]{Briol2016}.
In general, if the Lie algebra exhibited by the symmetries of an $n$th order ODE contains a $n$-dimensional solvable subalgebra, then that ODE can be reduced to $n$ nested integrals and the method we are about to present can be employed.
At a high level, as indicated in Fig. \ref{fig: schematic}, our proposed PNM performs exact Bayesian inference for the solution $\mathrm{s} \in \mathcal{S}$ of Eq.~\eqref{eq: G integral} and then transforms the resultant posterior in $\mathcal{P}_{\mathcal{S}}$ back into the original coordinate system to obtain an element in $\mathcal{P}_{\mathcal{Y}}$.
From Def. \ref{def: Bayesian}, our Bayesian PNM is uniquely determined by the combination of the prior $\nu \in \mathcal{P}_{\mathcal{S}}$ and the information operator
\begin{equation} \label{eq: info operator 1st order}
A(\mathrm{y}) = [G(r_0),...,G(r_n)] \in \mathcal{A} = \mathbb{R}^{n+1} 
\end{equation}
which corresponds indirectly to $n+1$ evaluations of the original gradient field $f$ at certain input pairs $(x_i,y_i)$.
The selection of the inputs $r_i$ is in principle unrestricted.
The transformation of a first order ODE is now briefly illustrated:
\begin{example}\label{1storderexample}
Consider the first order ODE
\numberwithin{equation}{example}
\begin{equation}\label{eq:1storderode}
\frac{\mathrm{d}\mathrm{y}}{\mathrm{d}x} \; = \; f(x,\mathrm{y}(x)), \qquad f(x,y) \; = \; F\left( \frac{y}{x} \right) .
\end{equation}
This ODE has the canonical coordinates $s = \log y$, $r = \frac{y}{x}$.
The transformed ODE is then
\begin{eqnarray}\label{eq:1storderodecanon}
\frac{\mathrm{d}\mathrm{s}}{\mathrm{d}r} & = & \cfrac{F(r)}{-r^2+rF(r)} \; =: \; G(r).
\end{eqnarray}
Thus an evaluation $G(r)$ corresponds to an evaluation of $f(x,y)$ at an input $(x,y)$ such that $r = \frac{y}{x}$.
\end{example}

%Two important points must now be addressed:
%First, 
The approach just described cannot proceed unless the prior $\nu \in \mathcal{P}_{\mathcal{S}}$ corresponds, implicitly, to a well-defined distribution $\mu \in \mathcal{P}_{\mathcal{Y}}$ in the original coordinate system $\mathcal{Y}$.
This precludes standard (e.g. Gaussian process) priors in general, as such priors assign mass to functions in $(r,s)$-space that do not correspond to well-defined functions in $(x,y)$-space (see Fig. \ref{fig: not well defined}).
%Second, any prior that is used ought to be consistent with the Lie group of transformations that the ODE is known to admit.
%To address each of these important points, we propose two general principles for prior construction in this work.

\begin{figure}[t!]
\centering
\resizebox{.6\textwidth}{!}{
\begin{tikzpicture}
\node[anchor=south west,inner sep=0] (image) at (0,0) {
\includegraphics[width = 0.6\textwidth,clip,trim = 3.6cm 4cm 4cm 4cm,page=2]{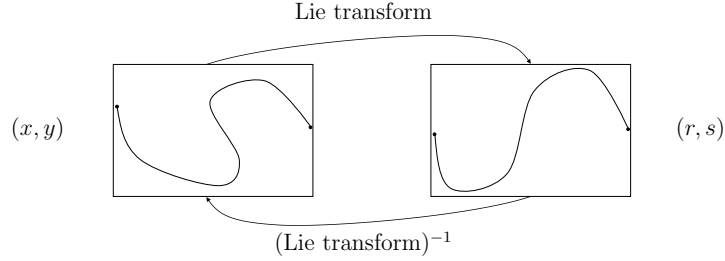}};
\node (A) at (-1.3,2.1) {$(x,y)$};
\node (B) at (10.5,2.1) {$(r,s)$};
\node (C) at (4.5,4.2) {Lie transform};
\node (D) at (4.5,0.05) {(Lie transform)$^{-1}$};
\end{tikzpicture} }
\caption{Illustration of the implicit prior principle: A prior elicited for the function $\mathrm{s}(r)$ in the transformed coordinate system $(r,s)$ must be supported on functions $\mathrm{s}(r)$ that correspond to well-defined functions $\mathrm{y}(x)$ in the original coordinate system $(x,y)$. Thus the situation depicted would not be allowed.}
\label{fig: not well defined}
\end{figure}

For such an implicit prior to be well-defined we need to understand when a function in $(r,s)$ space maps to a well-defined function in the original $(x,y)$ domain of interest.

\begin{principle*}[Implicit Prior]
A distribution $\nu \in \mathcal{P}_{\mathcal{S}}$ on the transformed solution space $\mathcal{S}$ corresponds to a well-defined \emph{implicit prior} $\mu \in \mathcal{P}_{\mathcal{Y}}$ provided that $x(r,\mathrm{s}(r))$ is strictly monotone as a function of $r$, or equivalently (without loss of generality) $\mathrm{d}x / \mathrm{d}r > 0$ for all $r$.
\end{principle*}

\begin{example}[Ex. \ref{1storderexample}, continued] \label{1storderexample again 2}
For the ODE in Ex. \ref{1storderexample}, with canonical coordinates $s = \log y$, $r = \frac{y}{x}$, if $x \in [x_0,x_T] = [1,x_T]$ and $y \in (0,\infty)$, then the region in the $(r,s)$ plane corresponding to $[1,x_T] \times (0,\infty)$ in the $(x,y)$ plane is $(0,\infty) \times \R$.
Now, 
\begin{eqnarray*}
\frac{\mathrm{d}x(r,\mathrm{s}(r))}{\mathrm{d}r} & = & \frac{r \mathrm{s}'(r) \exp(\mathrm{s}(r))-\exp(\mathrm{s}(r))}{r^2} .
\end{eqnarray*}
Thus $\mathrm{d}x / \mathrm{d}r > 0 \Leftrightarrow \mathrm{s}'(r) \; > \; 1 / r$ and the implicit prior principle requires that we respect the constraint
\begin{equation}\label{eq:domaincond}
\log(r) \le \mathrm{s}(r) \le \log(r)+\log(x_T) \quad \forall r>0.
\end{equation}
The prior $\nu \in \mathcal{P}_{\mathcal{S}}$ must therefore be supported on differentiable functions $\mathrm{s}$ on $r \in (0,\infty)$ and satisfying Eq.~\eqref{eq:domaincond}.
\end{example}

\section{Experimental Results} \label{sec: experiment}

In this section the proposed Bayesian PNM is experimentally tested.
Again the case of a first order ODE is considered. Scope is limited to verifying the correctness of the procedure, as well as indicating how implicit prior distributions can be constructed.

To limit scope, we consider our running example from Sec. \ref{sec: methods}:
\begin{equation}
\frac{\mathrm{d}\mathrm{y}}{\mathrm{d}x} = F\left( \frac{\mathrm{y}(x)}{x} \right) , \qquad x \in [1,x_T], \qquad \mathrm{y}(1) = y_0 . \label{eq: experiment 1st order}
\end{equation}
Note the associated canonical coordinates for this class of ODE have already been derived in Ex. \ref{1storderexample}.
In constructing a prior $\mu \in \mathcal{P}_{\mathcal{Y}}$, we must respect the implicit prior principle in Sec. \ref{sec: methods}.
Indeed, recall from Ex. \ref{1storderexample} that the ODE in Eq.~\eqref{eq: experiment 1st order} can be transformed into an ODE of the form
\begin{eqnarray*}
\frac{\mathrm{d} \mathrm{s}}{\mathrm{d} r} = G(r), \qquad r \in (0,\infty) , \qquad \mathrm{s}(y_0) = \log(y_0) .
\end{eqnarray*}
Then our approach constructs a distribution $\nu \in \mathcal{P}_{\mathcal{S}}$ where, from Ex. \ref{1storderexample again 2}, $\mathcal{S}$ is the set of differentiable functions $\mathrm{s}$ defined on $r \in (0,\infty)$ and satisfying Eq.~\eqref{eq:domaincond}.
Note that the constraints in Eq.~\eqref{eq:domaincond} preclude the direct use of standard prior models, such as Gaussian processes.
However, it is nevertheless possible to design priors that are convenient for a given set of canonical coordinates.
Indeed, for the canonical coordinates $r,s$ in our example, we can consider a prior of the form $\mathrm{s}(r) = \log(r) + \log(x_T)\zeta(r)$ where the function $\zeta : (0,\infty) \rightarrow \mathbb{R}$ satisfies $\zeta(y_0)=0$, $\zeta(r) \leq 1$ and $\mathrm{d}\zeta / \mathrm{d}r \geq 0$.
For this experiment, the approach of \cite{Lopez-Lopera2017} was used as a prior model for the monotone, bounded function $\zeta$; for brevity we refer the reader to the paper for further detail.

To obtain empirical results we consider the ODE with $F(r) = r^{-1} + r$ and $y_0 = 0.1$.
The function $F$ was evaluated on a regular grid of $n$ points.
The posterior distributions that were obtained as the number $n$ of data points was increased were plotted in the $(r,s)$ plane in Fig. \ref{fig: 1st order results rs} and in the $(x,y)$ plane in Fig. \ref{fig: 1st order results xy}.
Observe that the implicit prior principle ensures that all curves in the $(x,y)$ plane are well-defined functions (i.e. there is at most one $y$ value for each $x$ value).
Observe also that the posterior mass contracts to the true solution of the ODE as the number of evaluations $n$ of the gradient field is increased.

\begin{figure}[t!]
\centering
\includegraphics[width = 0.3\textwidth,clip,trim = 5cm 9.5cm 5cm 9.5cm]{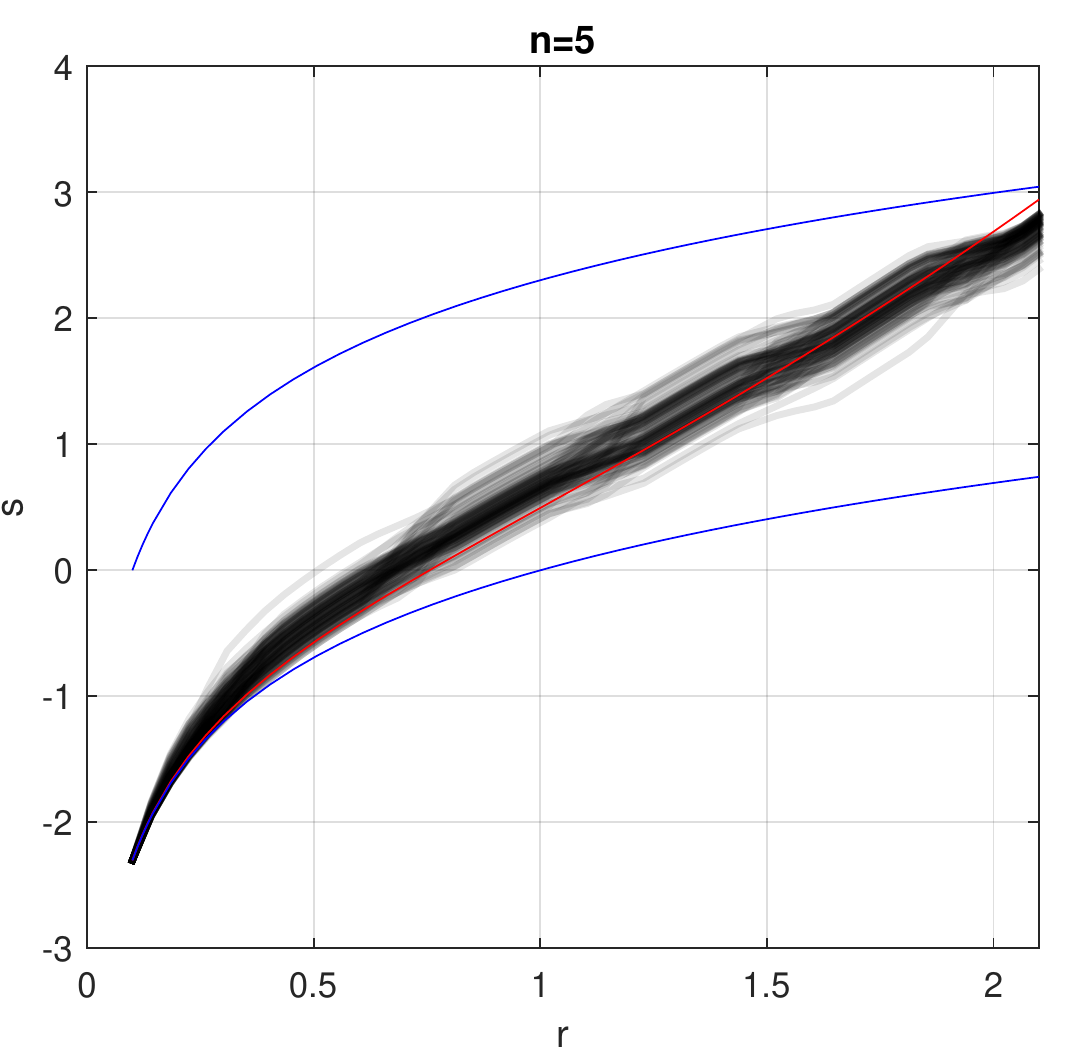} \hspace{30pt}
\includegraphics[width = 0.3\textwidth,clip,trim = 5cm 9.5cm 5cm 9.5cm]{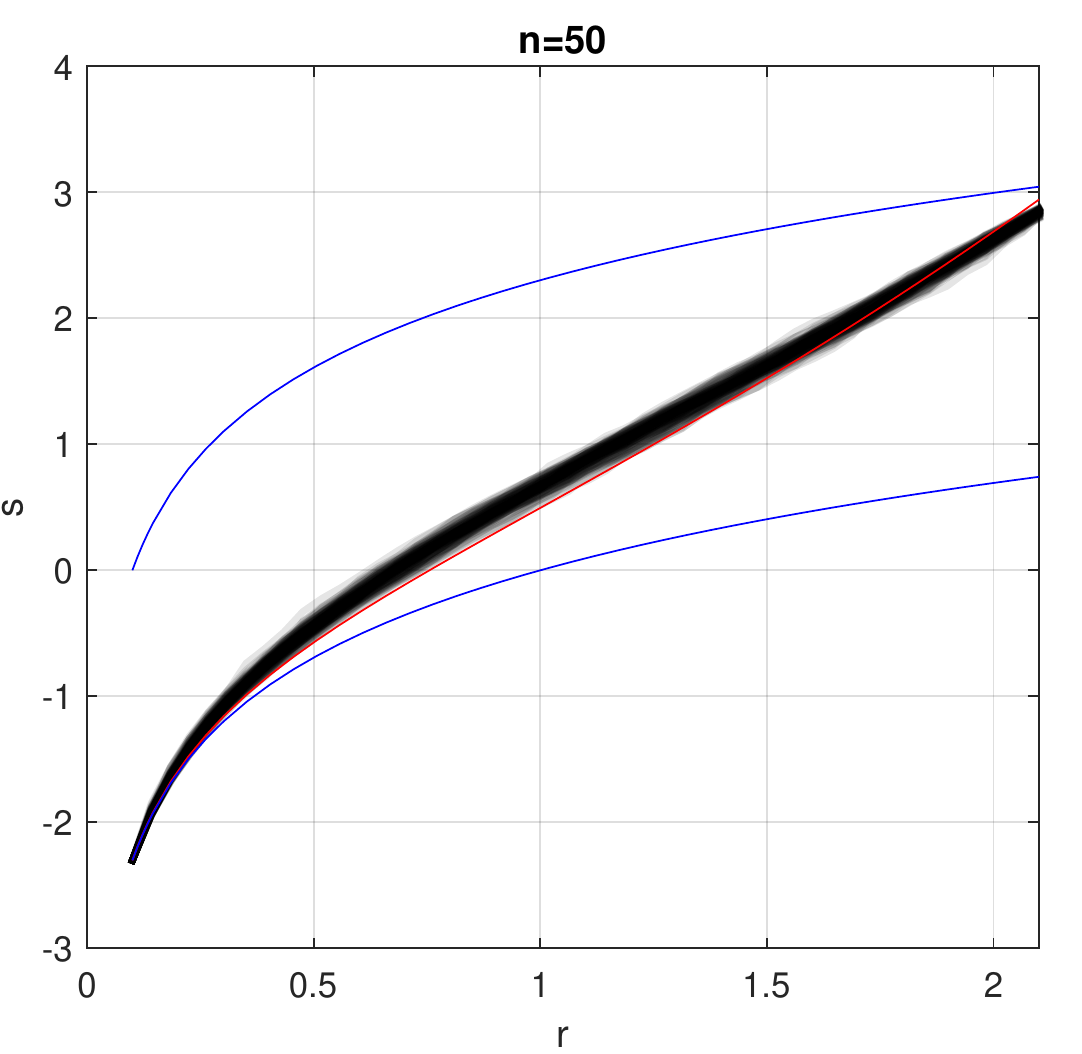}
\caption{Experimental results, first order ODE: 
The black curves represent samples from the posterior in the $(r,s)$ plane, whilst the exact solution is indicated in red.
The blue curves represent a constraint on the domain that arises when the implicit prior principle is applied.
The number $n$ of gradient field evaluations is indicated.}
\label{fig: 1st order results rs}
\end{figure}

\begin{figure}[t!]
\centering
\includegraphics[width = 0.3\textwidth,clip,trim = 5cm 9.5cm 5cm 9.5cm]{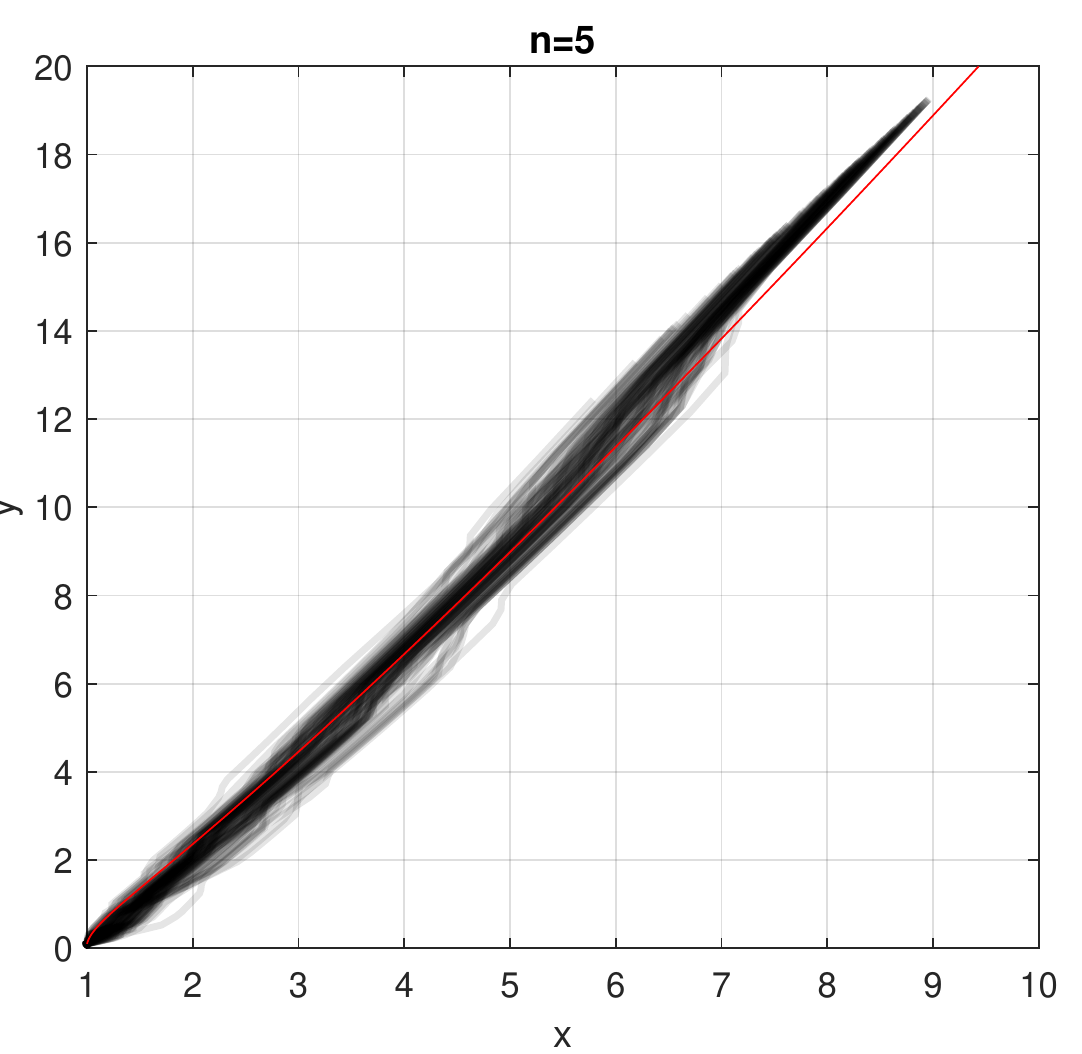} \hspace{30pt}
\includegraphics[width = 0.3\textwidth,clip,trim = 5cm 9.5cm 5cm 9.5cm]{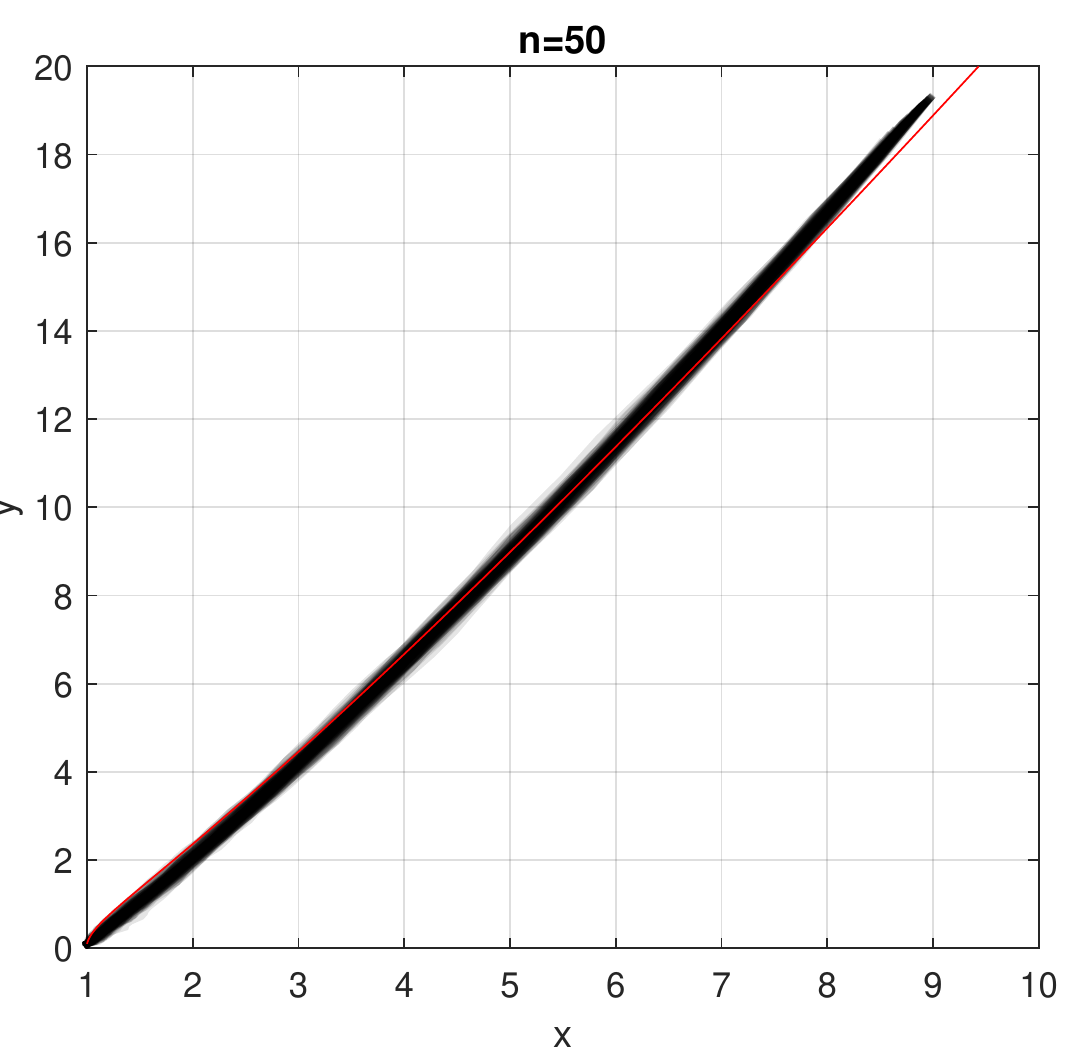}
\caption{Experimental results, first order ODE: 
The black curves represent samples from the posterior in the $(x,y)$ plane, whilst the exact solution is indicated in red.
The number $n$ of gradient field evaluations is indicated.}
\label{fig: 1st order results xy}
\end{figure}

\section{Conclusion} \label{sec: conclusion}

This paper presented a foundational perspective on the solution of ODEs by a PNM.
It was first argued that there did not yet exist a Bayesian PNM in this context.
Then, to address this gap, a novel Bayesian PNM was developed.
The Bayesian perspective that we have put forward sheds light on foundational issues which will need to be addressed going forward:

As explained in Section \ref{subsec: existing work}, existing PNM for ODEs each take the underlying state space $\mathcal{Y}$ to be the solution space of the ODE.
This appears to be problematic, in the sense that a generic evaluation $f(x_i,y_i)$ of the gradient field cannot be cast as information $A(\mathrm{y}^\dagger)$ about the solution $\mathrm{y}^\dagger$ of the ODE unless the point $(x_i,y_i)$ lies exactly on the solution curve $\{(x,\mathrm{y}^\dagger(x)) : x \in [x_0,x_T]\}$.
As a consequence, all existing PNM of which we are aware operate outside the Bayesian framework.
The assumption of a solvable Lie algebra, used in this work, can be seen as a mechanism to ensure the existence of an exact information operator $A$, so that this problem is avoided.

The proposed method was intended as a proof-of-concept and it is therefore useful to highlight the aspects in which it is limited. First, the route to obtain transformations admitted by the ODE demands that some aspects of the gradient field $f$ are known, in contrast to other work in which $f$ is treated as a black-box.
Second, the class of ODEs for which a solvable Lie algebra is admitted is relatively small, though references such as \cite{Bluman2002} document important cases where our method could be applied.
Third, the principles for prior construction that we identified do not entail a unique prior and, as such, the question of prior elicitation must still be addressed. 
Nevertheless, as our proof-of-concept has demonstrated, the development of an exact Bayesian method, or a principled approximation to such a method, is not an insurmountable task.

\vspace{5pt}
\noindent{\bf Acknowledgements:}
The authors are grateful to Mark Craddock, Fran\c{c}ois-Xavier Briol and Tim Sullivan for discussion of this work.
JW was supported by the EPSRC Centre for Doctoral Training in Cloud Computing for Big Data at Newcastle University, UK.
CJO was supported by the Lloyd's Register Foundation programme on data-centric engineering at the Alan Turing Institute, UK.
This material was based upon work partially supported by the National Science Foundation under Grant DMS-1127914 to the Statistical and Applied Mathematical Sciences Institute. Any opinions, findings, and conclusions or recommendations expressed in this material are those of the authors and do not necessarily reflect the views of the National Science Foundation.

\bibliographystyle{plainnat}
\bibliography{bibliography}

\appendix

\section{Appendix}

This appendix contains those definitions and concepts from the theory of Lie groups that are required to understand and implement our proposed Bayesian PNM.
In particular, a constructive proof of the key enabling result, Theorem \ref{1stordertoquad} in the main text, is provided.

\subsection{One-Parameter Lie Groups of Transformations}
\label{subsec: lie transformations}

\begin{definition}[One-Parameter Group of Transformations]
A \emph{one-parameter group of transformations} on a domain $D$ is a map
\begin{eqnarray*}
X : D \times S & \rightarrow & D \\
(x,\epsilon) & \mapsto & X(x,\epsilon) ,
\end{eqnarray*}
defined on $D \times S$ for some set $S \subset \mathbb{R}$, together with a bivariate map
\begin{eqnarray*}
\phi : S \times S & \rightarrow & S \\
(\epsilon,\delta) & \mapsto & \phi(\epsilon,\delta)
\end{eqnarray*}
such that the following hold:
\begin{enumerate}
\item[(1)] For each $\epsilon \in S$, the transformation $X(\cdot , \epsilon)$ is a bijection on $D$.
\item[(2)] $(S , \phi)$ forms a group with law of composition $\phi$.
\item[(3)] If $\epsilon_0$ is the identity element in $(\textit{S}, \phi)$, then $X(\cdot,\epsilon_0)$ is the identity map on $D$.
\item[(4)] For all $x \in D$ and $\epsilon,\delta \in S$, if $x^*=X(x,\epsilon), x^{**}= X(x^*,\delta)$, then $x^{**}=X(x^*,\phi (\epsilon,\delta))$.
\end{enumerate}
\end{definition}

In what follows we continue to use the shorthand notation $x^* = X(x,\epsilon)$.
The notion of a \emph{Lie} group additionally includes smoothness assumptions on the maps that constitute a group of transformations.
Recall that a real-valued function is \emph{analytic} if it can be locally expressed as a convergent power series.

\begin{definition}[One-Parameter Lie Group of Transformations]
Let $X$, together with $\phi$, form a one-parameter group of transformations on a domain $D$.
Then we say that $X$, together with $\phi$, form a \emph{one-parameter Lie group of transformations} on $D$ if, in addition, the following hold:
\begin{enumerate}
\item[(5)] $S$ is a (possibly unbounded) interval in $\mathbb{R}$. 
\item[(6)] For each $\epsilon \in S$, $X(\cdot,\epsilon)$ is infinitely differentiable in $\textit{D}$.
\item[(7)] For each $x \in D$, $X(x,\cdot)$ is an analytic function on $S$.
\item[(8)] $\phi$ is analytic in $S \times S$. 
\end{enumerate}
\end{definition}
\noindent Without loss of generality it will be assumed, through re-parametrisation if required, that $S$ contains the origin and $\epsilon=0$ is the identity element in $(S, \phi)$.
The definition is now briefly illustrated:

\begin{example}[Rotation Group] \label{ex: rotation}
The one-parameter transformation
\begin{eqnarray*}
x_1^* & = & x_1\cos(\epsilon)-x_2\sin(\epsilon) \\
x_2^* & = & x_1\sin(\epsilon)+x_2\cos(\epsilon)
\end{eqnarray*}
for $\epsilon \in \mathbb{R}$ forms a Lie group of transformations on $D = \mathbb{R}^2$ with group composition law $\phi(\epsilon,\delta)=\epsilon+\delta$.
\end{example}

\begin{definition}[Infinitesimal Transformation]
Let $X$ be a one-parameter Lie group of transformations. 
Then the transformation
\begin{eqnarray*}
x^* & = & x + \epsilon \xi(x) \\
\xi(x) & = & \left. \frac{\partial X(x,\epsilon)}{\partial \epsilon} \right|_{\epsilon = 0}
\end{eqnarray*}
is called the \emph{infinitesimal transformation} associated to $X$ and the map $\xi$ is called an \emph{infinitesimal}.
\end{definition}

\begin{definition}[Infinitesimal Generator]
The \emph{infinitesimal generator} of a one-parameter Lie group of transformations $X$ is defined to be the operator:
\begin{eqnarray*}
\mathrm{X} & = & \xi \cdot \nabla
\end{eqnarray*}
where $\xi$ is the infinitesimal associated to $X$ and $\nabla=(\frac{\partial}{\partial x_1}, \frac{\partial}{\partial x_2}, \dots , \frac{\partial}{\partial x_n})$ is the gradient.
\end{definition}

\begin{example}[Ex. \ref{ex: rotation}, continued]
For Ex. \ref{ex: rotation}, we have 
\begin{eqnarray*}
\xi(x) & = & \left( \left. \frac{\mathrm{d}x_1^*}{\mathrm{d}\epsilon} \right|_{\epsilon = 0}, \left. \frac{ \mathrm{d} x_2^*}{\mathrm{d}\epsilon} \right|_{\epsilon = 0}\right) \\
& = & (-x_2,x_1)
\end{eqnarray*}
so the infinitesimal generator for the rotation group is:
\begin{eqnarray*}
\mathrm{X} & = & -x_2\frac{\partial}{\partial x_1}+x_1\frac{\partial}{\partial x_2}
\end{eqnarray*}
\end{example}

The first fundamental theorem of Lie provides a constructive route to obtain the infinitesimal generator from the transformation itself:

\begin{theorem}[First Fundamental Theorem of Lie]
\label{thm: FTL 1}
A one parameter Lie group of transformations $X$ is characterised by the initial value problem:
\begin{eqnarray}
\frac{\mathrm{d}x^*}{\mathrm{d}\tau} & = & \xi(x^*) \label{eq: FTL 1} \\
x^* & = & x \hspace{10pt} \text{when } \tau = 0 
\end{eqnarray}
where $\tau(\epsilon)$ is a parametrisation of $\epsilon$ which satisfies $\tau(0) = 0$ and, for $\epsilon \neq 0$,
\begin{eqnarray}\label{eq:1stLietheorem}
\tau(\epsilon) & = & \int_0^{\epsilon} \cfrac{\partial \phi(a,b)}{\partial b} \Bigr|_{\substack{(a,b)=(\delta^{-1},\delta)}} \mathrm{d} \delta .
\end{eqnarray}
Here $\delta^{-1}$ denotes the group inverse element for $\delta$.
\end{theorem}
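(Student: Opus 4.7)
The plan is to extract a first-order ODE in $\epsilon$ for $x^* = X(x,\epsilon)$ directly from the group composition axiom $X(X(x,\epsilon), \delta) = X(x, \phi(\epsilon, \delta))$, and then absorb the prefactor that will appear in front of $\partial X/\partial \epsilon$ by a reparametrisation. Fixing $x$ and $\epsilon$ and expanding both sides in powers of $\delta$ around $\delta = 0$, the left-hand side is $x^* + \delta\, \xi(x^*) + O(\delta^2)$ by definition of the infinitesimal. On the right-hand side, axiom (3) gives $\phi(\epsilon,0)=\epsilon$ and axiom (8) then yields the analytic expansion $\phi(\epsilon,\delta) = \epsilon + \delta\, \Gamma(\epsilon) + O(\delta^2)$ with $\Gamma(\epsilon) := \partial\phi(\epsilon,b)/\partial b\big|_{b=0}$, so that
\[
X(x,\phi(\epsilon,\delta)) = X(x,\epsilon) + \delta\, \Gamma(\epsilon)\, \frac{\partial X(x,\epsilon)}{\partial \epsilon} + O(\delta^2).
\]
Matching $O(\delta)$ coefficients delivers the identity $\Gamma(\epsilon)\, \partial X(x,\epsilon)/\partial \epsilon = \xi(x^*)$.

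Next I would introduce a change of variable $\tau = \tau(\epsilon)$ characterised by $d\tau/d\epsilon = 1/\Gamma(\epsilon)$ together with $\tau(0)=0$. Under this reparametrisation the identity above rescales to the autonomous ODE $dx^*/d\tau = \xi(x^*)$, while the initial condition $x^* = x$ at $\tau = 0$ is inherited from axiom (3). Uniqueness of the solution---so that the Lie group is genuinely \emph{characterised} by this IVP, rather than merely satisfying it---then follows from Picard--Lindel\"of applied to $\xi$, whose regularity is supplied by axioms (6)--(7) on $X$.

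The main obstacle, and the only step that needs real care, is to show that the specific integrand appearing in the statement,
\[
\left. \frac{\partial \phi(a,b)}{\partial b}\right|_{(a,b)=(\delta^{-1},\delta)},
\]
actually equals $1/\Gamma(\delta)$, so that the formula for $\tau(\epsilon)$ in the theorem matches the natural reparametrisation $\int_0^\epsilon d\delta/\Gamma(\delta)$ derived above. To close this gap I would differentiate the inverse identity $\phi(\delta^{-1},\delta)=0$ with respect to $\delta$ and the associativity identity $\phi(\phi(a,b),c) = \phi(a,\phi(b,c))$ with respect to $c$ at $c=0$, then combine the resulting expressions algebraically; this is the classical equivalence of the left- and right-invariant forms on a one-parameter group. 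The direct perturbation above produces the left-invariant expression $1/\Gamma(\delta)$, while the theorem is phrased in the right-invariant form evaluated at the inverse element, and pinning down this identification is where the real work lies. Beyond that, everything reduces to analytic bookkeeping and a single appeal to ODE uniqueness.
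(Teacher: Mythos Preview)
The paper does not supply its own proof of this theorem; it simply cites pages 39--40 of \cite{Bluman2002}. Your argument is precisely the classical one found there, so in that sense your proposal and the paper's ``proof'' coincide by reference.

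Your outline is correct. One small sharpening of the step you flag as the main obstacle: differentiating the inverse identity is not actually needed. From the associativity relation $\phi(\phi(a,b),c)=\phi(a,\phi(b,c))$, differentiating in $c$ at $c=0$ gives
\[
\Gamma(\phi(a,b)) \;=\; \left.\frac{\partial\phi(a,b')}{\partial b'}\right|_{b'=b}\,\Gamma(b).
\]
Evaluating at $(a,b)=(\delta^{-1},\delta)$ yields $\Gamma(0) = \partial_2\phi(\delta^{-1},\delta)\,\Gamma(\delta)$, and since $\phi(0,\cdot)$ is the identity map (axiom (3)) one has $\Gamma(0)=1$. This delivers $\partial_2\phi(\delta^{-1},\delta) = 1/\Gamma(\delta)$ directly, matching your natural reparametrisation $\tau(\epsilon)=\int_0^\epsilon \mathrm{d}\delta/\Gamma(\delta)$ with the stated formula.
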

\begin{proof}
See pages 39-40 of \cite{Bluman2002}.
\end{proof}

Since Eq.~\eqref{eq: FTL 1} is translation-invariant in $\tau$, it follows that without loss of generality we can assume a parametrisation $\tau(\epsilon)$ such that the group action becomes $\phi(\tau_1,\tau_2) = \tau_1 + \tau_2$ and, in particular, $\tau^{-1} = - \tau$.
In the remainder, for convenience, we assume without loss of generality that all Lie groups are parametrised such that the group action is $\phi(\epsilon_1,\epsilon_2)  = \epsilon_1 + \epsilon_2$.

The next result can be viewed as a converse to Theorem \ref{thm: FTL 1}, as it shows how to obtain the transformation from the infinitesimal generator:

\begin{theorem} \label{thm: exp generator}
A one parameter Lie group of transformations with infinitesimal generator $\mathrm{X}$ is equivalent to
\begin{eqnarray*}
x^* & = & e^{{\epsilon}\mathrm{X}} x, \quad \mathrm{where} \quad e^{{\epsilon}\mathrm{X}}=\sum_{k=0}^{\infty} \frac{1}{k!}{\epsilon}^k\mathrm{X}^k x.
\end{eqnarray*}
\end{theorem}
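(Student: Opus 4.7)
The plan is to apply the First Fundamental Theorem of Lie, Theorem \ref{thm: FTL 1}, under the canonical parametrisation in which the group law reduces to addition. In that parametrisation the transformation $x^* = X(x,\epsilon)$ is, for each fixed $x$, the unique solution to the initial value problem $\frac{\mathrm{d}x^*}{\mathrm{d}\epsilon} = \xi(x^*)$, $x^*|_{\epsilon=0} = x$. The strategy is to compute all derivatives of $x^*$ with respect to $\epsilon$ at the origin and identify them as iterated applications of the infinitesimal generator $\mathrm{X} = \xi \cdot \nabla$, then sum the resulting Taylor series.

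First I would establish the key chain-rule identity: for any analytic scalar function $F$ on $D$,
\begin{equation*}
\frac{\mathrm{d}}{\mathrm{d}\epsilon} F(x^*) \;=\; \nabla F(x^*) \cdot \frac{\mathrm{d}x^*}{\mathrm{d}\epsilon} \;=\; \xi(x^*) \cdot \nabla F(x^*) \;=\; (\mathrm{X} F)(x^*).
\end{equation*}
Iterating this yields $\frac{\mathrm{d}^k}{\mathrm{d}\epsilon^k} F(x^*) = (\mathrm{X}^k F)(x^*)$ for all $k \geq 0$ by a straightforward induction. Specialising to the coordinate functions $F(x) = x_i$ and noting that $\mathrm{X} x_i = \xi_i(x)$, evaluation at $\epsilon = 0$ gives
\begin{equation*}
\left. \frac{\mathrm{d}^k x^*}{\mathrm{d}\epsilon^k} \right|_{\epsilon = 0} \;=\; \mathrm{X}^k x .
\end{equation*}

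Next, by property (7) in the definition of a one-parameter Lie group of transformations, $X(x,\cdot)$ is analytic on $S$ for each fixed $x$, so it agrees with its Taylor expansion in a neighbourhood of $\epsilon = 0$. Substituting the derivatives computed above gives
\begin{equation*}
x^* \;=\; \sum_{k=0}^{\infty} \frac{\epsilon^k}{k!} \mathrm{X}^k x \;=\; e^{\epsilon \mathrm{X}} x ,
\end{equation*}
which is the claimed formula on a neighbourhood of the identity. To extend to arbitrary $\epsilon \in S$ I would invoke the group property (4): an arbitrary transformation $X(\cdot,\epsilon)$ can be written as the composition of finitely many transformations with small parameter, each of which is given by the exponential series, and the exponential series itself satisfies the additive composition $e^{(\epsilon_1+\epsilon_2)\mathrm{X}} = e^{\epsilon_1 \mathrm{X}} e^{\epsilon_2 \mathrm{X}}$ as a formal identity that transfers to the action on $x$ by uniqueness of solutions to the IVP in Theorem \ref{thm: FTL 1}. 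Conversely, if $x^* = e^{\epsilon \mathrm{X}} x$ is taken as definition, differentiating the series term-by-term recovers the IVP, and the additive composition law on exponentials verifies the remaining group axioms, giving the reverse implication.

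The main technical obstacle I expect is the careful bookkeeping involved in showing $\frac{\mathrm{d}^k x^*}{\mathrm{d}\epsilon^k}|_0 = \mathrm{X}^k x$: the operator $\mathrm{X}$ acts on scalar functions of $x$, whereas $x^*$ is vector-valued and depends on $\epsilon$ only through the flow, so the induction must be phrased coordinatewise via $\mathrm{X} x_i = \xi_i$ and the chain-rule identity above. A secondary technical point is justifying convergence of the series globally on $S$ rather than only on a neighbourhood of zero, which is resolved via the group property together with analyticity, but which must be handled explicitly rather than taken for granted.
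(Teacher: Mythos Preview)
Your proposal is correct and follows essentially the same approach as the paper: both arguments establish the chain-rule identity $\frac{\mathrm{d}}{\mathrm{d}\epsilon}F(x^*) = (\mathrm{X}F)(x^*)$, iterate it to obtain $\frac{\mathrm{d}^k}{\mathrm{d}\epsilon^k}F(x^*) = (\mathrm{X}^kF)(x^*)$, specialise to the coordinate functions, and substitute into the Taylor expansion of $X(x,\epsilon)$ about $\epsilon=0$. Your version is in fact more careful than the paper's, which does not explicitly invoke Theorem~\ref{thm: FTL 1} for $\frac{\mathrm{d}x^*}{\mathrm{d}\epsilon}=\xi(x^*)$, does not cite property~(7) for convergence, and omits both the extension to all of $S$ and the converse direction; these are genuine improvements in rigour rather than a different route.
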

\begin{proof}
From Taylor's theorem we have that
\begin{eqnarray*}
x^* & = & X(x,\epsilon) \\
& = & \sum_{k=0}^\infty \frac{\epsilon^k}{k!} \left. \frac{\partial^k X(x,\epsilon)}{\partial \epsilon^k} \right|_{\epsilon = 0}
\end{eqnarray*}
For any differentiable function $F$ we have that
\begin{eqnarray*}
\frac{\mathrm{d}F(x^*)}{\mathrm{d}\epsilon} \; = \; \sum_{i=1}^d \frac{\partial F(x^*)}{\partial x_i^*} \frac{\mathrm{d}x_i^*}{\mathrm{d}\epsilon} \; = \; \sum_{i=1}^d \xi_i \frac{\partial F(x^*)}{\partial x_i^*} \; = \; \mathrm{X} F(x^*)
\end{eqnarray*}
and similarly
\begin{eqnarray*}
\frac{\mathrm{d}^kF(x^*)}{\mathrm{d}\epsilon^k} & = & \mathrm{X}^k F(x^*).
\end{eqnarray*}
Thus
\begin{eqnarray*}
\left. \frac{\partial^k X(x,\epsilon)}{\partial \epsilon^k} \right|_{\epsilon = 0} & = & \mathrm{X}^k x
\end{eqnarray*}
so that the stated result is recovered.
\end{proof}

The following is immediate from the proof of Theorem \ref{thm: exp generator}:

\begin{corollary} \label{cor: exp}
If $F$ is infinitely differentiable, then $F(x^*) = e^{\epsilon \mathrm{X}} F(x)$.
\end{corollary}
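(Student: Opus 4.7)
The plan is to imitate the proof of Theorem \ref{thm: exp generator} almost verbatim, simply replacing the identity map $x \mapsto x$ with the general smooth observable $F$. The crucial observation is that the intermediate step $\tfrac{\mathrm{d}^k F(x^*)}{\mathrm{d}\epsilon^k} = \mathrm{X}^k F(x^*)$ was already established in that proof for \emph{any} sufficiently differentiable $F$; only in the very last line was $F$ specialised to the identity in order to get $\left.\tfrac{\partial^k X(x,\epsilon)}{\partial \epsilon^k}\right|_{\epsilon=0} = \mathrm{X}^k x$. So the corollary should follow just by undoing that specialisation.

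Concretely, I would first apply Taylor's theorem to the scalar function $\epsilon \mapsto F(X(x,\epsilon))$ about $\epsilon = 0$, obtaining
\begin{eqnarray*}
F(x^*) & = & \sum_{k=0}^\infty \frac{\epsilon^k}{k!} \left.\frac{\mathrm{d}^k F(x^*)}{\mathrm{d}\epsilon^k}\right|_{\epsilon=0}.
\end{eqnarray*}
Next I would invoke the identity $\tfrac{\mathrm{d}^k F(x^*)}{\mathrm{d}\epsilon^k} = \mathrm{X}^k F(x^*)$ derived inside the proof of Theorem \ref{thm: exp generator} by iterated application of the chain rule; this uses only that $F$ is infinitely differentiable (so that $\mathrm{X}^k F$ is well-defined for every $k$) together with the first fundamental theorem of Lie $\tfrac{\mathrm{d} x^*}{\mathrm{d}\epsilon} = \xi(x^*)$. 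Evaluating at $\epsilon = 0$ and using property (3) of a one-parameter Lie group, which gives $X(x,0) = x$, one obtains $\left.\tfrac{\mathrm{d}^k F(x^*)}{\mathrm{d}\epsilon^k}\right|_{\epsilon=0} = \mathrm{X}^k F(x)$. Substituting back into the Taylor expansion and recognising the exponential series yields $F(x^*) = \sum_{k=0}^\infty \tfrac{\epsilon^k}{k!} \mathrm{X}^k F(x) = e^{\epsilon \mathrm{X}} F(x)$.

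The only genuine subtlety is convergence of the Taylor series. This is handled by the analyticity of $X(x,\cdot)$ in $\epsilon$ guaranteed by condition (7) in the definition of a one-parameter Lie group of transformations: composing the analytic curve $\epsilon \mapsto X(x,\epsilon)$ with the smooth function $F$ produces a function of $\epsilon$ whose Taylor series converges on a neighbourhood of zero. This is the main (and essentially only) place where the smoothness hypothesis on $F$ is used beyond ensuring existence of $\mathrm{X}^k F$ as a formal expression. Because all the analytic work has already been done in Theorem \ref{thm: exp generator}, the proof should consist of little more than the three displays above, and the statement in the text that the result is ``immediate'' is justified.
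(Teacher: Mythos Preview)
Your proposal is correct and is exactly the argument the paper intends: the text gives no separate proof, stating only that the corollary is ``immediate from the proof of Theorem~\ref{thm: exp generator}'', and your write-up is precisely the unspecialisation of that proof (Taylor expand $\epsilon \mapsto F(x^*)$, use the already-established identity $\mathrm{d}^k F(x^*)/\mathrm{d}\epsilon^k = \mathrm{X}^k F(x^*)$, evaluate at $\epsilon=0$). One small caveat: your convergence justification is not quite right, since composing a merely $C^\infty$ function $F$ with an analytic curve $\epsilon \mapsto X(x,\epsilon)$ need not produce an analytic function of $\epsilon$ (take the translation group and $F(t) = e^{-1/t^2}$); the paper, however, does not address convergence either and should be read at the level of formal power series, so this is not a gap relative to the source.
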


\subsection{Invariance Under Transformation}
\label{subsec: invar under trans}

In this section we explain what it means for a curve or a surface to be invariant under a Lie group of transformations and how this notion relates to the infinitesimal generator.

\begin{definition}[Invariant Function]
A function $F:D \rightarrow \mathbb{R}$ is said to be \emph{invariant} under a one parameter Lie group of transformations $x^* = X(x,\epsilon)$ if $F(x^*)=F(x)$ for all $x \in D$ and $\epsilon \in S$.
\end{definition}

Based on the results in Section \ref{subsec: lie transformations}, one might expect that invariance to a transformation can be expressed in terms of the infinitesimal generator of the transformation.
This is indeed the case:

\begin{theorem}\label{thm: invar 1}
A differentiable function $F:D \mapsto \mathbb{R}$ is invariant under a one parameter Lie group of transformations with infinitesimal generator $\mathrm{X}$ if and only if $\mathrm{X}F(x)=0$ for all $x \in D$.
\end{theorem}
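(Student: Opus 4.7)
The plan is to exploit the infinitesimal characterisation of the Lie group from Theorem \ref{thm: FTL 1}, together with the chain rule, to reduce the invariance condition to an ODE condition along orbits of the group. The proof splits cleanly into the two implications of the biconditional.

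For the forward direction, I assume $F(x^*) = F(x)$ for all $\epsilon \in S$ and all $x \in D$. Holding $x$ fixed, the map $\epsilon \mapsto F(X(x,\epsilon))$ is constant in $\epsilon$, so differentiating at $\epsilon = 0$ and applying the chain rule gives
\begin{eqnarray*}
0 \;=\; \left. \frac{\mathrm{d}}{\mathrm{d}\epsilon} F(X(x,\epsilon)) \right|_{\epsilon=0} \;=\; \sum_{i} \left.\frac{\mathrm{d}x_i^*}{\mathrm{d}\epsilon}\right|_{\epsilon=0} \frac{\partial F}{\partial x_i}(x) \;=\; \sum_i \xi_i(x) \frac{\partial F}{\partial x_i}(x) \;=\; \mathrm{X}F(x),
\end{eqnarray*}
where I have used $\left.\mathrm{d}x^*/\mathrm{d}\epsilon\right|_{\epsilon = 0} = \xi(x)$ from Theorem \ref{thm: FTL 1}. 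Since $x$ was arbitrary, $\mathrm{X}F \equiv 0$ on $D$.

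For the reverse direction, assume $\mathrm{X}F(x) = 0$ for all $x \in D$. Fix $x \in D$ and define $g(\epsilon) := F(X(x,\epsilon))$ for $\epsilon \in S$. By the chain rule and the first fundamental theorem of Lie, applied now at a general $\epsilon$ (noting that $X(x,\cdot)$ solves $\mathrm{d}x^*/\mathrm{d}\epsilon = \xi(x^*)$ under the assumed canonical parametrisation), one obtains
\begin{eqnarray*}
g'(\epsilon) \;=\; \sum_i \xi_i(x^*) \frac{\partial F}{\partial x_i^*}(x^*) \;=\; (\mathrm{X}F)(x^*) \;=\; 0,
\end{eqnarray*}
since by hypothesis $\mathrm{X}F$ vanishes everywhere on $D$, in particular at $x^* \in D$. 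Thus $g$ is constant on the interval $S$, and since $g(0) = F(x)$ by axiom (3), we conclude $F(x^*) = F(x)$ for every $\epsilon \in S$, which is the required invariance.

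There is really no serious obstacle; the entire argument rests on two ingredients, namely the chain rule and the identification $\mathrm{d}x^*/\mathrm{d}\epsilon = \xi(x^*)$ from Theorem \ref{thm: FTL 1}. The only mild subtlety worth flagging is that in the reverse direction the ODE identity must be used at every $\epsilon$, not merely at $\epsilon = 0$, so that one should either invoke the canonical $\phi(\epsilon_1,\epsilon_2) = \epsilon_1 + \epsilon_2$ parametrisation (available without loss of generality by the discussion following Theorem \ref{thm: FTL 1}) or equivalently appeal to Corollary \ref{cor: exp}, noting that if $\mathrm{X}F \equiv 0$ then $\mathrm{X}^k F \equiv 0$ for all $k \geq 1$ and the series $e^{\epsilon \mathrm{X}} F(x)$ collapses to $F(x)$.
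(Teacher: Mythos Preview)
Your proof is correct. The forward direction is the standard differentiation-at-the-identity argument, and in the reverse direction you correctly note that the ODE characterisation from Theorem~\ref{thm: FTL 1} must be invoked along the whole orbit, which is justified by the canonical additive parametrisation assumed in the paper.

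Your route differs from the paper's. The paper argues via Corollary~\ref{cor: exp}, writing $F(x^*) = e^{\epsilon \mathrm{X}} F(x)$, Taylor-expanding in $\epsilon$, and reading off the linear coefficient. You instead work directly with the flow ODE $\mathrm{d}x^*/\mathrm{d}\epsilon = \xi(x^*)$ and the chain rule, which is more elementary in that it avoids the formal exponential series and the question of its convergence; it also makes the reverse implication transparent (constancy of $g$ follows from $g' \equiv 0$ rather than from collapsing an infinite series). You do mention the series argument as an alternative at the end, and that alternative is essentially the paper's proof. One small advantage of the paper's approach is that it packages both directions into a single chain of equivalences; one small advantage of yours is that it is self-contained from Theorem~\ref{thm: FTL 1} without needing analyticity or the exponential representation.
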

\begin{proof}
The result is established as follows:
\begin{eqnarray*}
F \text{ invariant} & \Leftrightarrow & F(x^*) = 0 \text{ whenever } F(x) = 0 \\
& \Leftrightarrow & e^{\epsilon \mathrm{X}}F(x) = 0 \text{ whenever } F(x) = 0 \quad \text{(Cor. \ref{cor: exp})} \\
& \Leftrightarrow & F(x) + \epsilon \mathrm{X} F(x) + O(\epsilon^2) = 0 \text{ whenever } F(x) = 0 \quad \text{(Taylor)} \\
& \Leftrightarrow & \mathrm{X} F(x) = 0 \text{ whenever } F(x) = 0
\end{eqnarray*} 
where the last line follows since the coefficient of the $O(\epsilon)$ term in the Taylor expansion must vanish.
This completes the proof.
\end{proof}

\begin{theorem} \label{thm: invar 2}
For a function $F:D \mapsto \mathbb{R}$ and a one parameter Lie group of transformations $x^* = X(x,\epsilon)$, the relation $F(x^*) = F(x)+\epsilon$ holds for all $x \in D$ and $\epsilon \in S$ if and only if $\mathrm{X}F(x)=1$ for all $x \in D$.
\end{theorem}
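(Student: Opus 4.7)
The plan is to mirror the structure of the proof of Theorem \ref{thm: invar 1}, exploiting Corollary \ref{cor: exp} which tells us that $F(x^*) = e^{\epsilon \mathrm{X}}F(x)$ whenever $F$ is sufficiently smooth. The key observation is that the inhomogeneous relation $F(x^*) = F(x) + \epsilon$ is exactly what one gets from the exponential series when $\mathrm{X}F$ is the constant $1$, because the operator $\mathrm{X} = \xi \cdot \nabla$ annihilates constants and hence truncates the series after the linear term.

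For the forward direction ($\Rightarrow$), I would assume $F(x^*) = F(x) + \epsilon$ for all admissible $x$ and $\epsilon$, and differentiate both sides with respect to $\epsilon$ at $\epsilon = 0$. Using the chain rule identity $\frac{\mathrm{d}F(x^*)}{\mathrm{d}\epsilon} = \mathrm{X}F(x^*)$ established in the proof of Theorem \ref{thm: exp generator}, and evaluating at $\epsilon = 0$ where $x^* = x$, the left-hand side becomes $\mathrm{X}F(x)$, while the right-hand side is obviously $1$. This immediately yields $\mathrm{X}F(x) = 1$ for all $x \in D$.

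For the backward direction ($\Leftarrow$), I would apply Corollary \ref{cor: exp} to write
\begin{eqnarray*}
F(x^*) \;=\; e^{\epsilon \mathrm{X}} F(x) \;=\; \sum_{k=0}^{\infty} \frac{\epsilon^k}{k!} \mathrm{X}^k F(x).
\end{eqnarray*}
Under the hypothesis $\mathrm{X}F(x) = 1$, the linear term contributes $\epsilon$, and since $\mathrm{X} = \xi \cdot \nabla$ is a first-order differential operator it annihilates any constant, so $\mathrm{X}^k F(x) = \mathrm{X}^{k-1}(1) = 0$ for every $k \geq 2$. Substituting these into the series collapses it to $F(x) + \epsilon$, yielding the stated identity.

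I do not anticipate a substantial obstacle; the only subtlety is justifying that $\mathrm{X}^k F \equiv 0$ for $k \geq 2$, which follows mechanically from $\mathrm{X}$ being a derivation and from the assumed smoothness of $F$ that lets Corollary \ref{cor: exp} apply. An alternative, perhaps cleaner, route for the converse would be to differentiate the candidate identity $g(\epsilon) := F(x^*) - F(x) - \epsilon$ with respect to $\epsilon$, observe that $g'(\epsilon) = \mathrm{X}F(x^*) - 1 = 0$ under the hypothesis, note $g(0) = 0$, and conclude $g \equiv 0$ — this avoids any appeal to convergence of the exponential series and may be the preferred presentation.
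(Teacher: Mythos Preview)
Your proposal is correct and follows essentially the same approach as the paper: both rely on Corollary~\ref{cor: exp} to expand $F(x^*)$ as a power series in $\epsilon$ and then read off the coefficient of $\epsilon$. The paper's proof is extremely terse---it just writes $F(x^*) = F(x) + \epsilon\,\mathrm{X}F(x) + O(\epsilon^2)$ and says ``the result follows from inspection of the $\epsilon$ coefficient''---whereas you spell out both implications separately and, in particular, make explicit why the higher-order terms vanish in the backward direction (since $\mathrm{X}$ annihilates the constant $\mathrm{X}F=1$), which the paper leaves to the reader.
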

\begin{proof}
From Cor. \ref{cor: exp}, we have that $F(x^*) = F(x) + \epsilon \mathrm{X} F(x) + O(\epsilon^2)$.
The result follows from inspection of the $\epsilon$ coefficient.
\end{proof}

The following definition is fundamental to the method proposed in Section \ref{sec: methods} of the main text:

\begin{definition}[Canonical Coordinates] \label{def: canonical}
Consider a coordinate system $r = (r_1(x),  \dots , r_n(x))$ on a domain $D$.
Then any one parameter Lie group of transformations $x^* = X(x,\epsilon)$ induces a transformation of the coordinates $r_i^* = r_i(x^*)$.
The coordinate system $r$ is called \emph{canonical} for the transformation if:
\begin{eqnarray*}
r^*_1 & = & r_1  \\
& \vdots & \\
r^*_{n-1} & = & r_{n-1}  \\
r^*_n & = & r_n + \epsilon .
\end{eqnarray*}
\end{definition}

\begin{example}[Ex. \ref{ex: rotation}, continued] 
For the rotation group in Ex. \ref{ex: rotation}, we have canonical coordinates $r_1(x_1,x_2)=\sqrt{x_1^2+x_2^2}$ , $r_2(x_1,x_2)=\mathrm{arctan}(x_2/x_1)$.
\end{example}

In canonical coordinates, a one parameter Lie group of transformations can be viewed as a straight-forward translation in the $r_n$-axis.
The existence of canonical coordinates is established in Thm. 2.3.5-2 of \cite{Bluman2002}.
Note that Thms. \ref{thm: invar 1} and \ref{thm: invar 2} imply that $\mathrm{X}r^*_i=0$ for $i=1,2,...,n-1$, $\mathrm{X}r^*_n=1$.

\begin{definition}[Invariant Surface]
For a function $F:D \rightarrow \mathbb{R}$, a surface defined by $F(x)=0$ is said to be \emph{invariant} under a one parameter Lie group of transformation $x^* = X(x,\epsilon)$ if and only if $F(x^*)=0$ whenever $F(x)=0$ for all $x \in D$ and $\epsilon \in S$.
\end{definition}

The invariance of a surface, as for a function, can be cast in terms of an infinitesimal generator:

\begin{corollary} \label{thm: invar generator}
A surface $F(x)=0$ is invariant under a one parameter Lie group of transformations with infinitesimal generator $\mathrm{X}$ if and only if $\mathrm{X}F(x)=0$ whenever $F(x)=0$.
\end{corollary}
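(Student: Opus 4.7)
The result has the shape of an immediate corollary of Theorem \ref{thm: invar 1}: both hypothesis and conclusion are weakened from pointwise identities on $D$ to statements holding only on the locus $\{F=0\}$. My plan is therefore to mirror the proof of Theorem \ref{thm: invar 1}, using the exponential representation from Corollary \ref{cor: exp}, namely $F(x^*) = e^{\epsilon\mathrm{X}}F(x) = \sum_{k\ge 0}\frac{\epsilon^k}{k!}\mathrm{X}^kF(x)$, and restricting attention to points where $F$ vanishes.

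For the $(\Rightarrow)$ direction, I would pick $x$ with $F(x)=0$. The assumed surface invariance gives $F(x^*(\epsilon))=0$ for all $\epsilon\in S$. Differentiating this identity at $\epsilon=0$ and using the first fundamental theorem of Lie (Theorem \ref{thm: FTL 1}) to identify $\tfrac{dx^*}{d\epsilon}\big|_{\epsilon=0}=\xi(x)$ yields, via the chain rule, $\xi(x)\cdot\nabla F(x) = \mathrm{X}F(x) = 0$.

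The $(\Leftarrow)$ direction is the substantive half. Fix $x_0$ with $F(x_0)=0$ and set $g(\epsilon) := F(x^*(\epsilon))$. By Corollary \ref{cor: exp}, $g(\epsilon)=\sum_{k\ge 0}\frac{\epsilon^k}{k!}\mathrm{X}^kF(x_0)$, so it suffices to show $\mathrm{X}^kF(x_0)=0$ for every $k\ge 1$. The hypothesis supplies the $k=1$ case. For higher $k$ I would write $\mathrm{X}F=\lambda F$ for a suitably regular function $\lambda$ (defined locally near any surface point where $\nabla F\ne 0$, via Hadamard's lemma or the analytic division theorem) and then induct: assuming $\mathrm{X}^kF=\mu_kF$, one has $\mathrm{X}^{k+1}F = \mathrm{X}(\mu_kF) = (\mathrm{X}\mu_k + \mu_k\lambda)F$, so every term in the Taylor series for $g$ carries a factor of $F(x_0)=0$ and the series collapses to zero.

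The main obstacle is the division step $\mathrm{X}F=\lambda F$, which needs regularity of $F$ on its zero set that the statement leaves tacit. A cleaner alternative that sidesteps it is the geometric route: the hypothesis says exactly that the vector field $\xi$ is tangent to the surface $\{F=0\}$ at every point of it, whereupon standard uniqueness applied to the initial value problem in Theorem \ref{thm: FTL 1} forces an integral curve $x^*(\epsilon)$ starting on the surface to remain on it for all $\epsilon\in S$, giving surface invariance.
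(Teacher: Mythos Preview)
Your proposal is correct and in the same spirit as the paper. The paper gives no separate proof of this corollary; it is meant to be read off from the chain of equivalences already displayed in the proof of Theorem~\ref{thm: invar 1}, which is in fact written in the ``whenever $F(x)=0$'' form and simply matches the $\epsilon$-coefficient in $e^{\epsilon\mathrm{X}}F(x)=F(x)+\epsilon\,\mathrm{X}F(x)+O(\epsilon^2)$. Your $(\Rightarrow)$ argument is exactly this. For $(\Leftarrow)$ you go further than the paper: the paper's argument, taken at face value, only controls the linear term and is silent on why the $O(\epsilon^2)$ remainder also vanishes on the surface. Your Hadamard-division plus induction, or alternatively the tangency/flow argument via Theorem~\ref{thm: FTL 1}, supplies precisely that missing step. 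So the approaches coincide at the level of strategy, but your write-up is more complete on the substantive direction; the paper effectively leaves that gap to the reader.
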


This is sufficient background on groups of transformations; next we turn to the use of Lie group methods in the ODE context.

\subsection{Symmetry Methods for ODEs}
\label{subsec: symm method ODE}

The aim of this section is to relate abstract Lie transformations to those ODEs for which these transformations are admitted.
These techniques form the basis for our proposed method in Section \ref{sec: methods}.

For an ODE of the form in Eq.~\eqref{eq: ODE1}, one can consider the action of a transformation on the coordinates $(x,y)$; i.e. a special case of the above framework where the generic coordinates $x_1$ and $x_2$ are respectively the independent ($x$) and dependent ($y$) variables of the ODE.
It is clear that such a transformation also implies some kind of transformation of the derivative $\mathrm{d}y / \mathrm{d}x$; this is made explicit next.
Note that in this paper we focus on ODEs, as opposed to PDEs, and as such the independent variable $x$ is scalar.
The notation 
\begin{eqnarray*}
y_m & := & \frac{\mathrm{d}^m y}{\mathrm{d} x^m}
\end{eqnarray*}
will be used.
Consider a one-parameter Lie group of transformations 
\begin{eqnarray*}
(x^*,y^*) = (X(x,y;\epsilon), Y(x,y;\epsilon)) .
\end{eqnarray*}
Then we have from the chain rule that
\begin{eqnarray*}
y^*_m & := & \frac{\mathrm{d}^m y^*}{\mathrm{d}(x^*)^m} 
\end{eqnarray*}
is a function of $x,y,y_1, \dots ,y_m$ and we denote $y_m^* = Y_m(x,y,y_1, \dots ,y_m;\epsilon)$.
As an explicit example:
\begin{eqnarray*}
y^*_1 & = & \frac{\mathrm{d}y^*}{\mathrm{d}x^*} \\
& = & \frac{\frac{\partial{Y(x,y;\epsilon)}}{\partial x}+y_1\frac{\partial{Y(x,y;\epsilon)}}{\partial y}}{\frac{\partial{X(x,y;\epsilon)}}{\partial x}+y_1\frac{\partial{X(x,y;\epsilon)}}{\partial y}} \; =: \; Y_1(x,y,y_1;\epsilon)
\end{eqnarray*}
In general:
\begin{eqnarray*}
y^*_m & = & \frac{\frac{\partial{y^*_{m-1}}}{\partial x}+y_1\frac{\partial{y^*_{m-1}}}{\partial y}+y_2\cfrac{\partial{y^*_{m-1}}}{\partial y_1}+...+y_m\frac{\partial{y^*_{m-1}}}{\partial y_{m-1}}}{\frac{\partial{X(x,y;\epsilon)}}{\partial x}+y_1\frac{\partial{X(x,y;\epsilon)}}{\partial y}} \; =: \; Y_m(x,y,y_1, \dots ,y_m;\epsilon)
\end{eqnarray*}
In this sense a transformation defined on $(x,y)$ can be naturally extended to a transformation on $(x,y,y_1,y_2,\dots)$ as required.

\begin{definition}[Admitted Transformation]
An $m$th order ODE $F(x,y,y_1, \dots , y_m)=0$ is said to \emph{admit} a one parameter Lie group of transformations $(x^*,y^*)=(X(x,y;\epsilon), Y(x,y;\epsilon))$ if the surface $F$ defined by the ODE is invariant under the Lie group of transformations, i.e. if $F(x^*,y^*,y_1^*, \dots ,y_m^*)=0$ whenever $F(x,y,y_1, \dots , y_m) = 0$.
\end{definition}

\begin{example}
Clearly any ODE of the form $\frac{\mathrm{d}y}{\mathrm{d}x}=F(x)$ admits the transformation $(x^*,y^*)=(x, y+\epsilon)$.
\end{example}

Our next task is to understand how the infinitesimal generator of a transformation can be extended to act on derivatives $y_m$.

\begin{definition}[Extended Infinitesimal Transformation]\label{extendedeta}
The $m$th \emph{extended infinitesimals} of a one parameter Lie group of transformations $(x^*,y^*) = (X(x,y;\epsilon) , Y(x,y;\epsilon) )$ are defined as the functions $\xi, \eta, \eta^{(1)} , \dots , \eta^{(m)}$ for which the following equations hold:
\begin{align*}
x^* & =  X(x,y;\epsilon) & = & \; x+\epsilon\xi(x,y)+O(\epsilon^2) \\
y^* & = Y(x,y;\epsilon) & = & \; y+\epsilon\eta(x,y)+O(\epsilon^2) \\
y^*_1 & = Y_1(x,y,y_1;\epsilon) & = & \; y_1+\epsilon\eta^{(1)}(x,y,y_1)+O(\epsilon^2) \\
& \vdots & & \\
y^*_m & = Y_m(x,y,y_1, \dots ,y_m;\epsilon) & = & \; y_m+\epsilon\eta^{(m)}(x,y,y_1,y_2, \dots , y_m)+O(\epsilon^2)
\end{align*}
\end{definition}
\noindent It can be shown straightforwardly via induction that 
\begin{eqnarray}
\eta^{(m)}(x,y,y_1,y_2, \dots , y_m) & = & \frac{\mathrm{d}^m\eta}{\mathrm{d}x^m}-\sum_{k=0}^{m}\frac{m!}{(m-k)!k!}y_{m-k-1}\frac{\mathrm{d}^k\xi}{\mathrm{d}x^k} \label{eq: extended infinitesimals}
\end{eqnarray}
where $\frac{\mathrm{d}}{\mathrm{d}x}$ denotes the full derivative with respect to $x$, i.e. $\frac{\mathrm{d}}{\mathrm{d}x}=\frac{\partial}{\partial x}+y_1\frac{\partial}{\partial y}+\sum_{k=2}^{m+1}y_k\frac{\partial}{\partial y_{k-1}}$.
It follows that $\eta^{(m)}$ is a polynomial in $y_1,y_2,\dots,y_m$ with coefficients linear combinations of $\xi, \eta$ and their partial derivatives up to the $m$th order.

\begin{definition}[Extended Infinitesimal Generator]\label{extendedx}
The $m$th \emph{extended infinitesimal generator} is defined as
\begin{eqnarray*}
\mathrm{X}^{(m)} & = & \xi_m(x,y,y_1, \dots ,y_m)\cdot\nabla \\
& = & {\xi}(x,y)\frac{\partial}{\partial x}+{\eta}(x,y)\frac{\partial}{\partial y}+{\eta^{(1)}}(x,y)\frac{\partial}{\partial y_1} + \dots +{\eta^{(m)}}(x,y,y_1, \dots ,y_m)\frac{\partial}{\partial y_m} 
\end{eqnarray*}
where $\nabla=(\frac{\partial}{\partial x}, \frac{\partial}{\partial y}, \frac{\partial}{\partial y_1} , \dots , \frac{\partial}{\partial y_m})$ is the extended gradient.
\end{definition}

The following corollaries are central to the actual computation of the admitted Lie groups of an ODE.

\begin{corollary}\label{thm: invar 3}
A differentiable function $F:D_m \rightarrow \mathbb{R}$ where $D_m$ is the phase space containing elements of the form $(x,y,y_1,\dots,y_m)$, is invariant under a one parameter Lie group of transformations with an extended infinitesimal generator $\mathrm{X}^{(m)}$ if and only if $\mathrm{X}^{(m)}F(x,y,y_1,\dots,y_m)=0$ for all $(x,y,y_1,\dots,y_m) \in D_m$.
\end{corollary}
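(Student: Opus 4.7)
The plan is to lift the corollary to an application of Theorem \ref{thm: invar 1} on the extended phase space $D_m$. To do this, I would first verify that the map
\[
(x,y,y_1,\dots,y_m) \;\longmapsto\; \bigl(X(x,y;\epsilon),\,Y(x,y;\epsilon),\,Y_1(x,y,y_1;\epsilon),\,\dots,\,Y_m(x,y,y_1,\dots,y_m;\epsilon)\bigr)
\]
is itself a one-parameter Lie group of transformations on $D_m$. The group axioms on the base coordinates $(x,y)$ are given; the extension to higher derivatives is then forced by the chain rule, because each $Y_k$ is defined as $\mathrm{d}Y_{k-1}/\mathrm{d}X$, and this construction is compatible with composition: if $(x^{**},y^{**})=(X(x^*,y^*;\delta),Y(x^*,y^*;\delta))$ then by the chain rule $y_k^{**}$ obtained by differentiating along the composite coincides with $Y_k$ evaluated at the composite parameter $\phi(\epsilon,\delta)$. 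Smoothness and analyticity of the $Y_k$ in their arguments follow from smoothness and analyticity of $X$, $Y$, and the rational expressions defining the $Y_k$ away from the singular set $\partial X/\partial x + y_1 \partial X/\partial y = 0$, which the Lie-group axioms exclude near $\epsilon=0$.

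Next I would identify the infinitesimal of this extended group. By Definition \ref{extendedeta}, differentiating the relations $x^*=X(x,y;\epsilon)$, $y^*=Y(x,y;\epsilon)$, $y_k^*=Y_k(x,y,y_1,\dots,y_k;\epsilon)$ at $\epsilon=0$ yields the infinitesimal vector $(\xi,\eta,\eta^{(1)},\dots,\eta^{(m)})$. Hence the infinitesimal generator of the extended group, viewed as an operator on functions of $(x,y,y_1,\dots,y_m)$, is exactly
\[
\xi\,\partial_x + \eta\,\partial_y + \eta^{(1)}\,\partial_{y_1} + \cdots + \eta^{(m)}\,\partial_{y_m},
\]
which is $\mathrm{X}^{(m)}$ by Definition \ref{extendedx}.

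With these two facts in hand, the corollary becomes an immediate application of Theorem \ref{thm: invar 1} to the differentiable function $F:D_m\to\mathbb{R}$ under the extended Lie group on $D_m$: invariance $F(x^*,y^*,y_1^*,\dots,y_m^*)=F(x,y,y_1,\dots,y_m)$ for all admissible $\epsilon$ is equivalent to $\mathrm{X}^{(m)}F=0$ everywhere on $D_m$.

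The main obstacle I anticipate is the bookkeeping in step one, namely showing rigorously that the extended transformations form a genuine one-parameter Lie group on $D_m$ (closure under composition with the same group law $\phi$, and smoothness/analyticity). This is essentially a chain-rule verification, but must be carried out carefully because $Y_k$ involves an increasing number of partial derivatives; an induction on $k$, using the recursion $Y_k = \mathrm{d}Y_{k-1}/\mathrm{d}X$ together with the composition identity already established for $Y_{k-1}$, would handle it cleanly. Everything else then reduces to direct appeals to earlier results.
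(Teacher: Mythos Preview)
Your proposal is correct and follows exactly the route the paper intends: the corollary is stated without an explicit proof precisely because it is Theorem~\ref{thm: invar 1} applied verbatim on the extended phase space $D_m$, once one observes that the prolonged transformations form a one-parameter Lie group there with infinitesimal generator $\mathrm{X}^{(m)}$ as in Definitions~\ref{extendedeta} and~\ref{extendedx}. Your outline of the verification (chain-rule induction for the group law on the $Y_k$, identification of the infinitesimal as $(\xi,\eta,\eta^{(1)},\dots,\eta^{(m)})$) is the standard bookkeeping and is more detailed than the paper itself provides.
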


\begin{corollary}[Infinitesimal Criterion for Symmetries Admitted by an ODE] \label{thm: Invar criteria}
A one parameter Lie group of transformations is admitted by the $m$th order ODE $F(x,y,y_1,..,y_m)=0$ if and only if its extended infinitesimal generator $\mathrm{X}^{(m)}$ satisfies $\mathrm{X}^{(m)}F(x,y,y_1,\dots ,y_m)=0$ whenever $F(x,y,y_1,\dots,y_m)=0$.
\end{corollary}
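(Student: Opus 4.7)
The plan is to reduce this to the already-established surface-invariance criterion (Corollary \ref{thm: invar generator}), lifted from the base space $(x,y)$ to the extended phase space $D_m = \{(x,y,y_1,\dots,y_m)\}$. The key observation is that the ODE $F(x,y,y_1,\dots,y_m) = 0$ defines a surface in $D_m$, and ``the ODE admits the transformation'' is, by Definition of Admitted Transformation, exactly the statement that this surface is invariant under the extended action $(x,y,y_1,\dots,y_m) \mapsto (x^*,y^*,y_1^*,\dots,y_m^*)$.

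First, I would note that the extended transformation on $D_m$ is itself a one-parameter Lie group of transformations. This requires checking that the induced maps $Y_k$ (obtained recursively via the chain rule, as in Section \ref{subsec: symm method ODE}) inherit the group and smoothness properties from the base transformation $(X,Y)$: composition, identity, bijectivity for each fixed $\epsilon$, and analytic/smooth dependence on the parameter. These all follow because $Y_k$ is built from rational combinations of partial derivatives of $X$ and $Y$, and the base transformation is a Lie group. Once this is in hand, the extended infinitesimal generator $\mathrm{X}^{(m)}$ of Definition \ref{extendedx} is precisely the infinitesimal generator, in the sense of Section \ref{subsec: lie transformations}, of this extended Lie group acting on $D_m$; its coefficients are obtained by Taylor-expanding $Y_k$ in $\epsilon$, which is exactly how the $\eta^{(k)}$ were defined in Definition \ref{extendedeta}.

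Second, I would apply Corollary \ref{thm: invar generator} directly, replacing the abstract domain $D$ and generator $\mathrm{X}$ there by $D_m$ and $\mathrm{X}^{(m)}$, and the test function by $F : D_m \to \R$. That corollary yields: the surface $\{F = 0\} \subset D_m$ is invariant under the extended Lie group if and only if $\mathrm{X}^{(m)} F(x,y,y_1,\dots,y_m) = 0$ whenever $F(x,y,y_1,\dots,y_m) = 0$. Combining this with the identification of ``admitted'' as surface invariance in $D_m$ completes the proof.

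The only step requiring genuine care will be the first one, namely the verification that the chain-rule-derived maps $Y_k$ really do assemble into a bona fide one-parameter Lie group on $D_m$, with $\mathrm{X}^{(m)}$ as its infinitesimal generator. All the analytic ingredients are available from Theorem \ref{thm: FTL 1} and Theorem \ref{thm: exp generator}, and Eq.~\eqref{eq: extended infinitesimals} provides the explicit formula for the coefficients, so this amounts to an inductive bookkeeping argument rather than a conceptual obstacle. Once this lifting is accepted, the corollary is an immediate consequence of surface invariance applied in the extended space.
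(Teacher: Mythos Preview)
Your proposal is correct and matches the paper's intended argument: the paper states this result as a corollary without explicit proof, immediately after the analogous Corollary~\ref{thm: invar 3}, and the implicit justification is precisely the lifting you describe---treat the extended prolongation on $D_m$ as a one-parameter Lie group with infinitesimal generator $\mathrm{X}^{(m)}$ and then invoke the surface-invariance criterion of Corollary~\ref{thm: invar generator}. Your additional remark that the only point requiring care is verifying the extended maps $Y_k$ assemble into a bona fide Lie group on $D_m$ is apt, and the paper implicitly takes this for granted.
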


The reader now has sufficient background to address the case of a first-order ODE discussed in this paper.
In particular, we can now present a proof of the key enabling result stated in the main text.

\begin{proof} [Proof of Theorem \ref{1stordertoquad}]
\label{1stordertoquadproof}
Let the infinitesimal generator associated with the Lie group of transformations be denoted $\mathrm{X}$.
From the remarks after Def. \ref{def: canonical}, we can obtain canonical coordinates by solving the pair of first order partial differential equations $\mathrm{X}r=0$, $\mathrm{X}s=1$. 
By the chain rule we have 
\begin{eqnarray*}
\frac{\mathrm{d}s}{\mathrm{d}r} & = & \frac{s_x+s_{y}y'}{r_x+r_{y}y'} \; =: \; G(r,s)
\end{eqnarray*}
From the definition of canonical coordinates, the Lie group of transformations is $r^*=r$, $s^*=s+\epsilon$ in the transformed coordinate system, so 
\begin{eqnarray*}
\cfrac{\mathrm{d}s^*}{\mathrm{d}r^*} = G(r^*,s^*) & \implies & \cfrac{\mathrm{d}s}{\mathrm{d}r}=G(r,s+\epsilon)
\end{eqnarray*}
for all $\epsilon$, which implies $G(r,s)=G(r)$ and thus Eq.~\eqref{1storder} becomes
\begin{eqnarray*}
\cfrac{\mathrm{d}\mathrm{s}}{\mathrm{d}r} & = & G(r)
\end{eqnarray*}
as claimed.
\end{proof}

\end{document}